\let\csname equation*\endcsname\relax
\let\csname endequation*\endcsname\relax
\renewcommand{\d}[2]{\frac{d #1}{d #2}} % for derivatives
\renewcommand{\tr}{{\rm Tr}}
\newcommand{\diag}{{\rm diag}}
\newcommand{\eig}{{\rm eig}}
\newtheorem{lemma}{Lemma}
\newtheorem{proposition}{Proposition}
\newtheorem{theorem}{Theorem}
\newcommand{\proofend}{\hfill\fbox\\\medskip }
\renewenvironment{proof}{{\noindent \bf Proof }}{$\proofend$} 
\declaretheoremstyle[notefont=\bfseries,notebraces={}{},%
    headpunct={.},postheadspace=0.5em,bodyfont=\it]{mystyle}
\declaretheorem[style=mystyle,numbered=no,name=Theorem]{thm-hand}
\begin{document}
\title{Locally optimal symplectic control of multimode Gaussian states}
\author{Uther Shackerley-Bennett$^1$, Alberto Carlini$^{2,3,4}$, Vittorio Giovannetti$^{4,5}$, Alessio Serafini$^1$}
\address{$^1$Department of Physics \& Astronomy, University College London, London WC1E 6BT, UK \\ 
$^2$Dipartimento di Scienze ed Innovazione Tecnologica, Universit\`a del Piemonte Orientale, 15121 Alessandria, Italy \\
$^3$INFN, Sezione di Torino, Gruppo Collegato di Alessandria, Italy \\
$^4$NEST, Istituto di Nanoscienze-CNR, Pisa, Italy \\
$^5$Scuola Normale Superiore, 56127 Pisa, Italy}
%\date{\today}
\ead{uther.shackerley-bennett.14@ucl.ac.uk}
\begin{abstract}
The relaxation of a system to a steady state is a central point of interest in many attempts to advance control over the quantum world. 
In this paper, we consider control through instantaneous Gaussian unitary operations on the ubiquitous lossy channel, 
and find locally optimal conditions for the cooling and heating of a multimode Gaussian state subject to losses
and possibly thermal noise. This is done by isolating the parameters that encode entropy and temperature and by deriving an equation for their evolution. This equation is in such a form that it grants clear insight into how relaxation may be helped by instantaneous quantum control. It is thus shown that squeezing is a crucial element in optimising the rate of change of entropic properties under these channels. Exact relaxation times for heating and cooling are derived, up to an arbitrarily small distance from the fixed point of the lossy channel with locally optimal strategies.
\noindent{\it Keywords\/}: {Gaussian states, coherent control, open quantum systems}
\end{abstract}

\maketitle

\section{Introduction}\label{sec:intro}
As quantum technologies advance, it has become increasingly important to understand the dynamics of a system interacting with its environment. It is hoped that increased insight into such setups will provide a more realistic notion of how to control and exploit quantum behaviours. As well as considering the impact of the environment, one would also like to allow for an external controller who may influence the evolution of the state of the system. This beckons the introduction of open-loop quantum control theory, which is a mathematical framework exploring how time-varying Hamiltonians drive dynamics in the absence of feedback. Exploring these together in a particular scenario is the topic of this study.

Quantum control theory of closed systems is a highly active area of research \cite{Wu2007, Wu2008, Genoni2012, Shack2016, Dalessandro2007} with its roots in the older field of mathematical control theory \cite{JurdjevicGeo, Jurdjevic1972, Sachkov2000, Elliott2009}. Its introduction to open systems has taken place in different contexts, including multilevel discrete systems \cite{Jirari2005}, systems with closed `feedback' control \cite{Lloyd2001}, dissipating qubits \cite{Mukherjee2013} and in a drive to understand `quantum speed limits' \cite{Campo2013}. The current paper explores open system control theory for continuous variable quantum mechanics, and specifically for the class of Gaussian states. These are ubiquitous in physics and serve as a good model for electromagnetic radiation \cite{Bucca,Weedbrook2012, Ferraro2005}, optomechanical systems \cite{Aspelmeyer2014}, trapped ions \cite{Leibfried2003} and mesoscopic massive systems seeking the `gravitational quantum regime' \cite{Pino2016, Romero2011}. They are fundamental in the study of continuous variable quantum information due to the ease with which they can be described, and also the natural way in which noise may be introduced into the dynamical equations. The evolution of Gaussian states in dissipative quantum channels has been explored before and it is this literature that we build upon \cite{Carlini2014, Genoni2016}. 
%Notably, the act of cooling under such channels is a central aspect in much optomechanical research \cite{Wang2011, Millen2015, Sawadasky2015}.

If one models the environment as an infinite thermal bath coupled to the system through a beam-splitter Hamiltonian, then the channel is known as \textit{lossy} \cite{Eisert2005}. Lossy evolutions can be described as either `heating' or `cooling', depending on whether the environment has a higher or lower entropy than the initial state of the system. 
Here we consider the case of $n$ non-interacting bosonic modes, each coupled with a bosonic thermal environment with the same temperature, undergoing such a lossy evolution. We then ask for the locally optimal, open-loop control strategy to minimise the relaxation time to the steady state. In other words, we wish to maximise the rate of change of the entropy toward the fixed point of the dynamics. 
Control is considered in the idealised regime where only instantaneous transformations which act impulsively on the system are allowed, in particular we restrict the analysis to (possibly non-local) unitaries that are Gaussian preserving.
The restriction to finite, instantaneous controls (corresponding to control Hamiltonians proportional to delta functions in time)
is not a completely wild abstraction since 
experimental set-ups certainly exist where a 
unitary manipulation will typically take nanoseconds, or tens of nanoseconds, while 
the decoherence rates are in the order of $10-10^3$ ${\rm kHz}$, 
so that the former may fairly be regarded as instantaneous with respect to the time-scales 
on which the noise acts. 
In what follows, we also disregard the possibility of any feedback on the quantum system.
Let us anticipate that, in the absence of specific constraints, 
the non-compact nature of the symplectic group allows for an arbitrarily high amount of squeezing to occur in these control operations. To make the analysis more realistic and reduce infinities, this will be capped.

A central result of this study is showing that 
squeezing is the main parameter of interest in the analysis of heating and cooling dynamics. A highly squeezed state is locally optimal for minimising the relaxation time in a heating channel. Interestingly, squeezing can also be used to drive the state away from the fixed point of a cooling channel. Conversely, undoing the squeezing is shown to be locally optimal for cooling strategies, which is a point of experimental interest. The minimum time it takes to reach the fixed point of the channel within an arbitrarily small distance is derived in both cases. The analysis rests on the derivation of 
a compact equation for the evolution of entropic quantities -- Eq.~(\ref{eq:masterm}) -- whose variations might also be applicable to the evolution of entanglement of Gaussian states under lossy channels, as discussed in the conclusion.

\section{Gaussian states}\label{sec:gauss}
We begin with an outline of Gaussian states, referring to the introductions given in Refs.~\cite{Bucca,Olivares2011}.

Let $\hat{\mathbf{r}} = (\hat{x}_1,\hat{p}_1,\ldots, \hat{x}_n,\hat{p}_n)^\intercal$ be a vector of canonical operators such that $[\hat{x}_j,\hat{p}_j] = i\delta_{jk}$, where $\delta_{jk}$ is the Kronecker delta function. From this we find that $[\hat{\mathbf{r}},\hat{\mathbf{r}}^\intercal] = i\Omega$ where
\begin{equation}
\Omega = \bigoplus_{i=1}^n \begin{pmatrix} 0 & 1 \\ -1 & 0 \end{pmatrix}.
\end{equation}
Quadratic Hamiltonians are defined as those that can be written as $\hat{H} = \frac{1}{2}\hat{\mathbf{r}}^\intercal H \hat{\mathbf{r}} + \hat{\mathbf{r}}^\intercal \mathbf{a}$ where $H$ is a $2n\times 2n$, real, symmetric matrix and $\mathbf{a}$ is a vector of real numbers. The set of Gaussian states can be defined as the ground and thermal states of positive definite quadratic Hamiltonians:
\begin{equation}\label{eq:system}
\hat{\rho}_G = \frac{e^{-\beta \hat{H}}}{\tr[e^{-\beta \hat{H}}]},
\end{equation}
where $\beta $ is the inverse temperature of the state (we set Boltzmann's constant $k_B=1$). Such states are referred to as Gaussian due to their Wigner representation which takes a Gaussian form. The Wigner representation immediately suggests that such states should be totally specified by their first and second moments, defined respectively as
\begin{equation}
\mathbf{d} = \tr[\hat{\mathbf{r}}\hat{\rho}_G],
\end{equation}
\begin{equation}
\sigma = \tr[\{(\hat{\mathbf{r}}-\mathbf{d}),(\hat{\mathbf{r}}-\mathbf{d})^\intercal\}\hat{\rho}_G].
\end{equation}
$\sigma$ is the $2n \times 2n$ covariance matrix of the Gaussian state and $\{\cdot,\cdot\}$ denotes the anticommutator. 

Evolution under quadratic Hamiltonians transforms Gaussian states into Gaussian states. The second moment encodes all entropic and entanglement properties and so we restrict our attention, here and in the rest of the paper, to ignore the first moments. In this case it is possible to consider $\mathbf{d}=0$ in both the state and the quadratic Hamiltonian transformation. Once this is done the transformations generated by the remaining Hamiltonians, and their combinations, form the symplectic group $\operatorname{Sp}(2n,\mathbb{R})$. This is the set of $2n \times 2n$ matrices $S$ such that
\begin{equation}
S\Omega S^\intercal = \Omega.
\end{equation}
Covariance matrices transform under a finite dimensional representation of the symplectic group by conjugation,
\begin{equation}
\sigma \to S\sigma S^\intercal.
\end{equation}

\section{Open Diffusive dynamics}\label{sec:diff}
The details of Markovian evolution for Gaussian states are presented in Refs.~\cite{Bucca,Genoni2016} where, following a standard approach in quantum optics, the weak coupling to an infinite bath is replaced with the coupling to a finite number of 
continuously refreshed environmental modes. 
This coupling is provided by the Hamiltonian,
\begin{equation}
 H_C = \begin{pmatrix} 0 & C \\ C^\intercal & 0 \end{pmatrix}.
\end{equation}
The evolution of the Gaussian state has a representation in terms of the associated covariance matrix which takes the form
\begin{equation}\label{eq:covmastergen}
\dot{\sigma} = A\sigma + \sigma A^\intercal + D,
\end{equation}
where 
\begin{equation}
 A = \frac{1}{2}\Omega C \Omega C^\intercal \quad \text{and} \quad D = \Omega C \sigma_B C^\intercal \Omega^\intercal,
\end{equation}
with $\sigma_B$ denoting the bath covariance matrix.

Lossy channels occur by setting the coupling Hamiltonian to induce a swapping of field excitations between the system and the bath. 
We focus on the case of $n$ non-interacting bosonic modes, each coupled with a bosonic thermal environment via an individual exchange Hamiltonian of the form:
\begin{equation}
\hat{H}_C = \sqrt{\eta} (\hat{a}\hat{b}^\dagger + \hat{a}^\dagger \hat{b}),
\end{equation}
where $\hat{a}$ denotes the annihilation operator of the system and $\hat{b}$ the annihilation operator of the bath mode. This is equivalent to setting $C = \sqrt{\eta} \Omega$ which in turn sets $A = -\eta\mathbb{I}/2$ and $D = \eta \sigma_B$. We set the state of the bath as $\sigma_B = \chi \mathbb{I}$ where $\chi = 2\bar{N}+1$ and $\bar{N}$ is the average photon number in each bath mode. 
We assume that the $n$ thermal baths have the same $\bar N$, i.e. the same temperature. Rescaling the time parameter by $\eta$, the relaxation rate of the lossy channel, we arrive at
\begin{equation}\label{eq:master}
\dot{\sigma} = -\sigma + \chi \mathbb{I},
\end{equation}
which represents the free, lossy evolution of the Gaussian state with solution
\begin{equation}\label{eq:mastersol}
\sigma = \chi \mathbb{I} + (\sigma(0) - \chi\mathbb{I})e^{-t}.
\end{equation}
Eq.~(\ref{eq:master}) will be used extensively in the rest of the paper as we explore the evolution of some important parameters of $\sigma$.

\section{Symplectic invariants}\label{sec:sympinv}
An $n$-mode covariance matrix has $n(2n+1)$ free parameters that compose its information content. $n$ of these parameters have the special status of being invariant under symplectic transformations and so play a unique role in describing the state. These symplectic invariants are important in that they encode the entropy of the state, relating to its temperature and mode frequency \cite{Serafini2006, Serafini2004}. We will explore two different ways of representing these invariants allowing the derivation of a new equation from Eq.~(\ref{eq:master}) that isolates their evolution. This analysis will clear the route to an understanding of entropy evolution and will provide a platform on which to introduce control.

There is no unique way to define the symplectic invariants but the collection most commonly considered is the set of symplectic eigenvalues, expressible through Williamson's theorem \cite{Olivares2011, Williamson1936}. This states that any $2n$-dimensional, real, symmetric, positive-definite matrix, for instance $\sigma$, may be decomposed as 
\begin{equation}
 \sigma = SWS^\intercal,
\end{equation}
where
\begin{equation}
W =\bigoplus_{i=1}^n\nu_i \mathbb{I}_2,
\end{equation}
and $\mathbb{I}_2$ is the $2\times 2$ identity matrix, 
$W>0$ and $S \in \operatorname{Sp}(2n,\mathbb{R})$. The elements of $W$ are unique up to reordering. Furthermore the uncertainty principle on $\sigma$ ensures that $\nu_i \geq 1$ \cite{Olivares2011}. The parameters $\nu_i$ are the symplectic eigenvalues which can also be calculated as moduli of the eigenvalues of $\Omega\sigma$.

Any function of symplectic invariants is also a symplectic invariant. Consider the $k$th elementary symmetric function of the eigenvalues $\lambda_i$ of some $m \times m$ matrix $X$, defined as \cite{Meyer2000}
\begin{equation}
\vartheta_k[X] := \sum_{E \in \mathcal{E}_k^m} \prod_{j \in E} \lambda_j,
\end{equation}
where the sum runs over all the possible $k$-subsets $E \in \mathcal{E}_k^m$ of the first $m$ natural integers. More explicitly,
\begin{equation}
E \in \mathcal{E}_k^m \subset P(\mathbb{N}_m) \quad \text{iff} \quad |E| = k,
\end{equation}
where $\mathbb{N}_m = \{1,\ldots,m\}$, $P(\cdot)$ denotes the power set and we put $\vartheta_0[X] := 1$. As an example, for a matrix $X$ with eigenvalues $\{\lambda_1,\lambda_2,\lambda_3,\lambda_4\}$, 
\begin{equation}
\vartheta_3[X] = \lambda_1\lambda_2\lambda_3 + \lambda_1\lambda_3\lambda_4 + \lambda_1\lambda_2\lambda_4 + \lambda_2\lambda_3\lambda_4.
\end{equation}

Following Ref.~\cite{Serafini2007a} this allows for the definition of a new set of symplectic invariants of $\sigma$, given by $\vartheta_{2k}[\Omega\sigma]$ and related to the symplectic eigenvalues via
\begin{equation}
\vartheta_{2k}[\Omega\sigma] = \sum_{E \in \mathcal{E}_k^n}\prod_{j \in E} \nu_j^2.
\end{equation}
The elementary symmetric functions of odd order vanish in this case because of the alternating signs of the eigenvalues of $\Omega \sigma$. Both the set of symplectic eigenvalues and this new set contain the entropic information of the state.

To further our analysis of these new invariants we recall some basic linear algebra \cite{Meyer2000}. Consider an $m\times m$ matrix $X$ and delete the same $m-r$ rows and columns. The remaining $r \times r$ submatrix is known as a principal submatrix of $X$. The determinant of a principal submatrix is known as a principal minor. Given the characteristic polynomial $\sum_{k=0}^m c_k \lambda^{m-k}$ of $X$, where $c_0 = 1$, we have that
\begin{equation}
\begin{aligned}
c_k[X] &= (-1)^k \sum \text{ (all $k\times k$ principal minors)}, \\
\vartheta_k[X] &= \sum \text{ (all $k \times k$ principal minors)}.
\end{aligned}
\end{equation}
Noting that $c_{2k}[X] \equiv \vartheta_{2k}[X]$ we see that the study of principal minors and characteristic polynomial coefficients gives a new way to express and understand the symplectic invariants of a covariance matrix. In fact there is a way to recursively generate $c_k[X]$ using Fadeev-Le Verrier recursion.
\begin{theorem}[Fadeev-Le Verrier recursion \cite{Helmberg1993, Fadeev1963, Hou1998}]
 Let $X$ be an $m \times m$ real matrix. Let its characteristic polynomial be written $\det (X-\lambda \mathbb{I}) = \sum_{k=0}^m c_k \lambda^{m-k}$ with $c_0 = 1$. It is possible to calculate the coefficients of the polynomial via the recursive formula,
\begin{equation}\label{eq:lev}
c_k[X] = -\frac{1}{k} \sum_{i=0}^{k-1} \tr [X^{k-i}] c_i[X].
\end{equation}
\end{theorem}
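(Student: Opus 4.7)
The plan is to reduce the statement to the classical Newton identities connecting the power sums of the eigenvalues of $X$ to their elementary symmetric functions. Working over $\mathbb{C}$, I factor the characteristic polynomial as $\prod_{i=1}^m(\lambda-\lambda_i)$, where $\lambda_1,\ldots,\lambda_m$ are the eigenvalues of $X$ repeated according to algebraic multiplicity. Expanding this product identifies, up to the overall sign induced by the convention $\det(X-\lambda\mathbb{I})$ versus $\det(\lambda\mathbb{I}-X)$, the coefficient $c_k$ with $(-1)^k e_k$, where $e_k=\vartheta_k(\lambda_1,\ldots,\lambda_m)$ is the $k$-th elementary symmetric polynomial. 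This is consistent with the principal-minor description of $c_k$ recalled just before the theorem; in particular the recursion in Eq.~(\ref{eq:lev}) is insensitive to this overall sign, since it multiplies both sides.

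The key ingredient is the observation that $\tr[X^k]=\sum_{i=1}^m\lambda_i^k=:p_k$, the $k$-th power sum of the spectrum; this follows from similarity of $X$ over $\mathbb{C}$ to an upper-triangular (Jordan) form carrying the $\lambda_i$ on its diagonal, together with the invariance of the trace under similarity. Newton's identities then read
\[
k\,e_k \;=\; \sum_{i=1}^{k}(-1)^{i-1} e_{k-i}\,p_i.
\]
Substituting $e_j=(-1)^j c_j$ on both sides, the alternating signs collapse into a single overall minus sign, giving $k\,c_k=-\sum_{i=1}^{k} c_{k-i}\,p_i$; reindexing the sum by $j=k-i$ produces exactly Eq.~(\ref{eq:lev}).

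The only substantive step is therefore Newton's identities themselves. These are most cleanly obtained from the generating-function identity $\log\prod_i(1-t\lambda_i)=\sum_i\log(1-t\lambda_i)$, differentiated in $t$ and matched coefficient-wise; equivalently, one can differentiate $\det(\mathbb{I}-tX)$ and use the expansion $\tr[X(\mathbb{I}-tX)^{-1}]=\sum_{k\geq 0}t^k\tr[X^{k+1}]$ to derive the recursion directly at the matrix level, bypassing any explicit appeal to the spectrum. I do not anticipate any genuine obstacle: the result is a standard rephrasing of Newton's identities, and the only care required is in tracking the sign convention that turns $e_k$ into $c_k$.
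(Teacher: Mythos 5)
Your proposal is correct. Note that the paper does not actually prove this theorem: it is imported verbatim from the cited literature (Helmberg, Faddeev, Hou) with no argument given, so there is no in-paper proof to compare against. Your reduction to Newton's identities is the standard route and it checks out: with $c_k=(-1)^k e_k$ (consistent with the principal-minor formulas the paper states just before the theorem) and $p_i=\tr[X^i]$ via triangularization over $\mathbb{C}$, substituting into $k\,e_k=\sum_{i=1}^{k}(-1)^{i-1}e_{k-i}p_i$ does collapse all signs into a single overall minus, and the reindexing $j=k-i$ yields Eq.~(\ref{eq:lev}) exactly. You are also right that the one genuine convention subtlety --- $\det(X-\lambda\mathbb{I})$ has leading coefficient $(-1)^m$, so $c_0=1$ strictly normalizes the polynomial by $(-1)^m$ --- is harmless, because the recursion is homogeneous of degree one in the $c_i$ and hence insensitive to an overall rescaling. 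Your alternative matrix-level derivation, differentiating $\log\det(\mathbb{I}-tX)$ via Jacobi's formula and expanding $\tr[X(\mathbb{I}-tX)^{-1}]$ as a power series, is arguably the cleaner of the two sketches, since it avoids any appeal to the spectrum and delivers the recursion on the coefficients directly; either version would serve as a self-contained proof where the paper relies on citation alone.
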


This recursive generation of $c_k[X]$ will be the guide to stripping Eq.~(\ref{eq:master}) to only consider the evolution of the invariants. Before stating this evolution equation it is necessary to spend a little more time on notation, defining 
 \begin{equation}
\bar{\vartheta}_{2k}^i[\Omega\sigma] := \sum_{E \in \mathcal{E}_{k}^{n, i}} \prod_{j \in E} \nu_j^2,
 \end{equation}
where 
\begin{equation}
E \in \mathcal{E}_k^{n,i} \subset P(\mathbb{N}_n\setminus \{i\}) \quad \text{iff} \quad |E| = k.
\end{equation}
This new object acts as a sort of reduced $\vartheta_{2k}[\Omega\sigma]$ where we remove the terms involving the $i$th symplectic eigenvalue. Again we define $\bar{\vartheta}_{0}^i[\Omega\sigma] := 1$. To illustrate this new function consider some covariance matrix $\sigma$ with symplectic eigenvalues $\{\nu_1,\nu_2,\nu_3,\nu_4\}$. Here, we would have
\begin{equation}
\bar{\vartheta}^2_4[\Omega\sigma] =  \nu_1^2\nu_3^2 + \nu_1^2\nu_4^2 + \nu_3^2\nu_4^2.
\end{equation}

Using Fadeev-Le Verrier recursion to express the characteristic polynomial coefficients and their link to symplectic eigenvalues we are able to derive an evolution equation for these invariants under the open diffusive dynamical equation, Eq.~(\ref{eq:master}).
\begin{theorem}\label{thm:masterm}
For $\sigma$ evolving under $\dot{\sigma} = -\sigma + \chi\mathbb{I}$, the evolution of the symplectic invariants, defined by $\vartheta_{2k}[\Omega\sigma]$, obeys
\begin{equation}\label{eq:masterm}
\dot{\vartheta}_{2k}[\Omega\sigma] = -2k\vartheta_{2k}[\Omega\sigma] +\chi \tr[SV_{k}S^\intercal],
\end{equation}
where $\sigma = SWS^\intercal$ by the Williamson decomposition and
\begin{equation}
V_{k} = \bigoplus_{i=1}^n \left(\nu_i \bar{\vartheta}_{2(k-1)}^i[\Omega\sigma]\right)\mathbb{I}_2.
\end{equation}
\end{theorem}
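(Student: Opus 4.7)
The plan is to study the generating polynomial
\[
f(t) \;:=\; \det(\mathbb{I}+t\Omega\sigma) \;=\; \sum_{k=0}^{n} \vartheta_{2k}[\Omega\sigma]\, t^{2k},
\]
where the second equality uses the $\pm i\nu_j$ eigenvalue structure of $\Omega\sigma$ and the resulting vanishing of the odd $\vartheta$'s, and to differentiate it in time. By Jacobi's formula,
\[
\dot f(t) \;=\; t\, f(t)\, \tr\!\left[(\mathbb{I}+t\Omega\sigma)^{-1}\Omega\dot\sigma\right].
\]
Substituting Eq.~(\ref{eq:master}) splits this into two pieces. The $-\sigma$ piece collapses to $-t\,\partial_t f(t) = -\sum_k 2k\,\vartheta_{2k}[\Omega\sigma]\, t^{2k}$, so matching coefficients of $t^{2k}$ immediately delivers the $-2k\vartheta_{2k}$ term in (\ref{eq:masterm}).

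The remaining task is to identify the coefficient of $t^{2k}$ in the $\chi$-piece $\chi t f(t)\tr[(\mathbb{I}+t\Omega\sigma)^{-1}\Omega]$ with $\chi\tr[SV_kS^\intercal]$. First I would invoke Williamson's theorem to write $\Omega\sigma = S^{-\intercal}(\Omega W)S^\intercal$ (exploiting $\Omega S = S^{-\intercal}\Omega$, a consequence of $S^\intercal\Omega S = \Omega$), so that $(\mathbb{I}+t\Omega\sigma)^{-1} = S^{-\intercal}(\mathbb{I}+t\Omega W)^{-1}S^\intercal$. Since $\Omega W = \bigoplus_i \nu_i\Omega_i$ is block diagonal with $\Omega_i^2 = -\mathbb{I}_2$, inverting each $2\times 2$ block yields the closed form
\[
f(t)\,(\mathbb{I}+t\Omega W)^{-1} \;=\; G(t) - t\, G(t)\,\Omega W, \qquad G(t) \;:=\; \bigoplus_{i=1}^n g_i(t)\,\mathbb{I}_2,
\]
with $g_i(t) := \prod_{j\ne i}(1+t^2\nu_j^2) = \sum_{\ell\ge 0} t^{2\ell}\, \bar\vartheta_{2\ell}^{\,i}[\Omega\sigma]$. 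This is precisely how the reduced invariants $\bar\vartheta_{2\ell}^{\,i}$ enter. Right-multiplying by $\Omega$, using $\Omega W\Omega = -W$ and $S^\intercal\Omega = \Omega S^{-1}$, and taking the trace gives
\[
f(t)\,\tr\!\left[(\mathbb{I}+t\Omega\sigma)^{-1}\Omega\right] \;=\; \tr\!\left[G(t)\Omega\,(S^\intercal S)^{-1}\right] + t\,\tr\!\left[G(t)W\,(S^\intercal S)^{-1}\right].
\]
The first trace vanishes, because $G(t)\Omega$ is block diagonal with antisymmetric $2\times 2$ blocks $g_i(t)\Omega_i$, while the diagonal blocks of the symmetric matrix $(S^\intercal S)^{-1}$ are themselves symmetric.

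The hard part is recasting the surviving $(S^\intercal S)^{-1}$ as the $SS^\intercal$ dictated by the statement, since these are not equal as matrices. Symplecticity enters here in an essential way: setting $M := S^\intercal S$, a short calculation from $S\Omega S^\intercal=\Omega$ shows $M^{-1} = -\Omega M\Omega$. Hence the $i$-th diagonal $2\times 2$ block of $M^{-1}$ equals $-\Omega_i M_{ii}\Omega_i$, whose trace coincides with $\tr[M_{ii}]$ (conjugation by $\Omega_i$ preserves the $2\times 2$ trace). Therefore, for any $V = \bigoplus_i v_i\,\mathbb{I}_2$,
\[
\tr[V M^{-1}] \;=\; \sum_{i=1}^n v_i\,\tr[(M^{-1})_{ii}] \;=\; \sum_{i=1}^n v_i\,\tr[M_{ii}] \;=\; \tr[V M] \;=\; \tr[S V S^\intercal].
\]
Applying this identity to $V = V_k = \bigoplus_i \nu_i\,\bar\vartheta_{2(k-1)}^{\,i}\,\mathbb{I}_2$, which is exactly the coefficient of $t^{2(k-1)}$ in $G(t)W$, and reading off the coefficient of $t^{2k}$ in the $\chi$-piece produces $\chi\tr[SV_kS^\intercal]$ and completes the derivation of (\ref{eq:masterm}).
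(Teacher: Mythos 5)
Your proof is correct, and it takes a genuinely different route from the paper's. The paper works entirely with the Fadeev--Le Verrier recursion: it Taylor-expands $c_{2k}[\Omega\sigma(t+{\rm d}t)]$ to first order in ${\rm d}t$, establishes the expansion by induction on $k$ (with a double-sum relabelling to kill the cross terms), and then obtains the compact trace form by telescoping the resulting sum $\sum_{i}(-1)^{k-i}\tr[SW^{2(k-i)-1}S^\intercal]\vartheta_{2i}$ after inserting the Williamson decomposition. You instead package all the invariants into the generating polynomial $\det(\mathbb{I}+t\Omega\sigma)$ and differentiate once with Jacobi's formula, which produces the $-2k\vartheta_{2k}$ term and the source term simultaneously and replaces both the induction and the telescoping by an explicit $2\times2$ block inversion of $\mathbb{I}+t\Omega W$; the reduced invariants $\bar\vartheta^{\,i}_{2\ell}$ appear automatically as coefficients of $\prod_{j\neq i}(1+t^2\nu_j^2)$, which is arguably more transparent than the paper's cancellation argument. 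The one genuinely delicate step in your route --- that the computation naturally yields $\tr[V_k(S^\intercal S)^{-1}]$ rather than $\tr[SV_kS^\intercal]$ --- you resolve correctly with the identity $M^{-1}=-\Omega M\Omega$ for $M=S^\intercal S$, which equates the traces of corresponding diagonal $2\times2$ blocks of $M$ and $M^{-1}$; I verified this and the vanishing of the $\tr[G(t)\Omega(S^\intercal S)^{-1}]$ term, and both are sound. Your argument is shorter and, by exhibiting the full generating function, would also hand the reader all $n$ evolution equations at once; the paper's recursive route has the minor advantage of reusing machinery (the Fadeev--Le Verrier coefficients) already introduced for other purposes in Section 4.
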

\begin{proof}
See~\ref{sec:sympinvevol}.
\end{proof}

Eq.~(\ref{eq:masterm}) directly reveals the evolution of the $n$ invariants under Eq.~(\ref{eq:master}), meaning that we are able to focus on the parameters that interest us and ignoring the $2n^2$ remaining ones.

Open loop control refers to the enactment of predetermined operations without feedback on a quantum state, which would typically be unitary. The importance of the symplectic group is that it is the Gaussian analogue of the unitary group, in that the elements do not alter the entropic, or Gaussian, properties of the state. We consider symplectic control acting instantaneously at a given time in the evolution, allowing the alteration of $S$ on the right hand side of Eq.~(\ref{eq:masterm}). 
Notice that this represents an impulsive action altering the state by a finite amount at a certain time and not a term generated by a 
Hamiltonian proportional to a step function in time (the $S$ in Eq.~(\ref{eq:masterm}) is a time-dependent property of the evolving state).
This is with the aim of optimising the rate of the change of $\vartheta_{2k}$, where the argument of $\vartheta_{2k}$ is assumed to be $\Omega \sigma$ unless otherwise stated.

Note that $S$ is defined via the Williamson decomposition of $\sigma$ and so both terms on the right hand side of Eq.~(\ref{eq:masterm}) could be affected by its alteration. The first term, however, is a symplectic invariant and so no amount of manipulation will alter its value. However, the second term does vary with $S$ and so if, at a given point in the evolution, we alter it then this term will change. Maximising or minimising the value of the trace term is therefore the route towards locally optimising the rate of change of $\vartheta_{2k}$. In Sec.~\ref{sec:opt} we find that this optimisation will allow for the study of decoupled dynamics where each mode evolves independently. This drastically simplifies the analysis and allows for the optimal relaxation times for heating and cooling. We will show that squeezing is the key parameter that alters this term in Eq.~(\ref{eq:masterm}).

Notice also that all possible initial Gaussian states are allowed in our treatment, since all physical covariance matrices admit a Williamson decomposition in terms of some 
symplectic $S$.

\section{Heating and cooling}\label{sec:opt}
Eq.~(\ref{eq:masterm}) shows that varying $S$ to alter the trace term is the required action to optimise the rate of change of $\vartheta_{2k}$. First we derive the symplectic matrix that will maximise or minimise this term and then we turn our attention to the dynamics that this optimisation invokes. This latter analysis will provide locally optimal heating and cooling times for the evolution of the channel under symplectic control.

\subsection{Trace optimisation}
To explore the optimisation of the trace term in Eq.~(\ref{eq:masterm}) we first require an understanding of the singular value decomposition of symplectic matrices. Any symplectic matrix may be decomposed as $S = R_1ZR_2$ where $R_1,R_2 \in \operatorname{OSp}(2n,\mathbb{R}) = \operatorname{O}(2n) \cap \operatorname{Sp}(2n,\mathbb{R})$ and $Z = \diag(z_1, 1/z_1,\ldots z_n, 1/z_n)$. The elements of $\operatorname{OSp}(2n,\mathbb{R})$ correspond to `passive' elements in the lab such as phase shifters and beam splitters. The $Z$ component encodes the squeezing element of the transformation. Any symplectic that is singular value decomposed such that $Z \neq \mathbb{I}$ is referred to as `active'. Using this decomposition will allow us to deconstruct and understand the trace term of Eq.~(\ref{eq:masterm}).

For generality and brevity of notation we will consider a trace term of the form $\tr[SYS^\intercal]$ where $Y = \bigoplus_{i=1}^n y_i\mathbb{I}_2$, with $y_i$ positive, and $S \in \operatorname{Sp}(2n,\mathbb{R})$. 
Note that the elements of $Z$ are unbounded and closely related to the energy required to 
enact the operation. 
Therefore the full optimisation will be done 
over the set $\overline{\operatorname{Sp}}(2n,\mathbb{R})$, defined as the restriction of the symplectic 
group to elements with maximum singular value $\overline{z}_i$ for each mode. This encapsulates the reasonable assumption that the energy that may be employed in each mode, rather than the sum of the energies, is bounded.
\begin{proposition}\label{thm:tropt}
The supremum of the trace term over the set $\overline{\operatorname{Sp}}(2n,\mathbb{R})$ is given by
\begin{equation}\label{eq:prop1}
 \sup_{S \in \overline{\operatorname{Sp}}(2n,\mathbb{R})}\tr[S Y S^\intercal] = \sum_{i=1}^n 2\zeta^+_{\bar{z}_i} y_i,
\end{equation}
where $\zeta^{\pm}_z := \frac{z^{2}\pm 1/z^{2}}{2}$, $\bar{z}_i$ is the maximum local squeezing allowed on mode $i$, ordered such that $\bar{z}_1 \geq \ldots \bar{z}_n$, and $y_1 \geq \ldots \geq y_n$. The infimum is given by $\tr[Y]$.
\end{proposition}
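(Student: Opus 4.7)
The plan is to leverage the singular-value decomposition $S=R_1 Z R_2$ (recalled just above the proposition) to reduce the optimisation to the squeezing diagonal $Z$ together with a single orthogonal symplectic $R_2$. By cyclicity of the trace and $R_1^\intercal R_1=\mathbb{I}$,
\[
\tr[SYS^\intercal]=\tr[R_1 Z R_2 Y R_2^\intercal Z R_1^\intercal]=\tr[Z^2 R_2 Y R_2^\intercal],
\]
so $R_1$ drops out entirely and only the squeezing and the ``pre-squeezing'' rotation matter.

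The key structural step is to observe that $Y=\bigoplus_i y_i\mathbb{I}_2$ commutes with $\Omega$, and so does every $R_2\in\operatorname{OSp}(2n,\mathbb{R})$ (from $R_2\Omega R_2^\intercal=\Omega$ combined with $R_2^\intercal=R_2^{-1}$). Hence $\tilde Y:=R_2 Y R_2^\intercal$ is symmetric and commutes with $\Omega$. A short block calculation then shows that each $2\times 2$ block of $\tilde Y$ must have the form $\begin{pmatrix} a_{ij} & b_{ij} \\ -b_{ij} & a_{ij} \end{pmatrix}$, and symmetry ($b_{ii}=-b_{ii}$) forces the diagonal blocks to be scalar, $\tilde Y_{ii}=\tilde y_i\,\mathbb{I}_2$ for some $\tilde y_i\geq 0$. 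Substituting back collapses the trace into a single sum over modes,
\[
\tr[SYS^\intercal]=\sum_{i=1}^n\bigl(z_i^2+z_i^{-2}\bigr)\,\tilde y_i=2\sum_{i=1}^n\zeta^+_{z_i}\,\tilde y_i.
\]

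To characterise the admissible $(\tilde y_1,\dots,\tilde y_n)$ I would invoke the standard isomorphism $\operatorname{OSp}(2n,\mathbb{R})\cong U(n)$: under it $R_2$ is represented by a unitary $U$, and $Y$ by the Hermitian matrix $\diag(y_1,\dots,y_n)$ on $\mathbb{C}^n$, so that $\tilde y_i$ is just the $i$th diagonal entry of $U\,\diag(y_1,\dots,y_n)\,U^\dagger$. The Schur--Horn theorem then identifies the achievable vectors $(\tilde y_i)$ as exactly those majorised by $(y_i)$. This is the step I expect to be the main obstacle: everything afterwards reduces to elementary real-variable inequalities, but without this identification the majorisation cone cannot be exploited.

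For the supremum I would then fix $\tilde y$ and, using monotonicity of $\zeta^+_z$ on $z\geq 1$ together with the symmetry $z\leftrightarrow 1/z$ (which lets me assume $z_i\geq 1$), saturate $z_i=\bar z_i$. The remaining task is $\max_{\tilde y\prec y}\sum_i 2\zeta^+_{\bar z_i}\tilde y_i$ with both $(\bar z_i)$ and $(y_i)$ decreasingly sorted. The rearrangement inequality lets me also sort $\tilde y$ decreasingly, and Abel summation-by-parts applied to the majorisation partial-sum bounds $\sum_{i\leq k}\tilde y_i\leq\sum_{i\leq k}y_i$ shows the maximum is attained at $\tilde y=y$, giving $\sum_i 2\zeta^+_{\bar z_i}\,y_i$; this value is realised by $R_2=\mathbb{I}$ and $z_i=\bar z_i$. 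For the infimum, $\zeta^+_z\geq 1$ with equality iff $z=1$ yields $\tr[SYS^\intercal]\geq 2\sum_i\tilde y_i=\tr[Y]$, attained whenever $Z=\mathbb{I}$, completing the claim.
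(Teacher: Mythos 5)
Your proposal is correct, and it shares the paper's skeleton (singular value decomposition, cancellation of $R_1$, reduction to the unitary picture of $\operatorname{OSp}(2n,\mathbb{R})$, then a rearrangement-type bound) while differing in the two middle steps. Where the paper conjugates by the explicit $P$ and $Q$ matrices to turn the trace into $2\alpha^\intercal P\beta$ with $P_{ij}=|U_{ij}|^2$ unistochastic, you instead use the commutant of $\Omega$ to show that the diagonal $2\times2$ blocks of $\tilde Y=R_2YR_2^\intercal$ are scalar, collapsing the trace to $2\sum_i\zeta^+_{z_i}\tilde y_i$ directly; these are literally the same quantity, since $\tilde y_i=\sum_j|U_{ij}|^2y_j$. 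Where the paper then invokes Coope's elementary exchange argument on bistochastic matrices (reproduced in its~\ref{sec:coope}), you invoke Schur--Horn plus Abel summation over the majorisation partial sums. Your route is slightly more economical than you suggest: for the bounds you only need the Schur direction (the diagonal of $U\,\diag(y)\,U^\dagger$ is majorised by $y$), since the extremisers $\tilde y=y$ and $Z=\mathbb{I}$ or $Z=\bar Z$ are exhibited explicitly with $R_2=\mathbb{I}$, so the Horn achievability direction is not actually required. The paper's approach buys self-containedness (an elementary iterative argument in place of a named theorem); yours buys a cleaner structural explanation of why only the mode-diagonal of the rotated $Y$ matters, at the cost of importing Schur--Horn. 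Both correctly conclude with the per-mode optimisation $z_i=\bar z_i$ for the supremum and $z_i=1$ for the infimum.
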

\begin{proof}
The singular value decomposition provides
\begin{equation}\label{eq:sup1}
\begin{aligned}
\tr[S Y S^\intercal] &=\tr[ R_1 Z R_2 Y R_2^\intercal Z R_1^{\sf T}] \\
&=\tr[Z^2 R_2 Y R_2^\intercal].
\end{aligned}
\end{equation}
It is possible to change to a basis in which the symplectic form becomes $\Omega \to \begin{pmatrix} 0_n &  \mathbb{I}_n \\ -\mathbb{I}_n & 0_n \end{pmatrix}$, where $0_n$ denotes the $n\times n$ zero matrix and $\mathbb{I}_n$ the $n\times n$ identity matrix. This is a commonly used basis in Gaussian state theory and is enacted with
\begin{equation}
 P_{kl} = \begin{cases} 1, \quad k\leq n, l=2k-1, \\ 1, \quad k > n, l=2(k-n) \end{cases}
\end{equation}
acting by similarity on the components of the trace term \cite{Ferraro2005}. After enacting this we will then consider a further similarity transformation with
\begin{equation}
 Q = \frac{1}{\sqrt{2}}\begin{pmatrix} \mathbb{I}_n & i\mathbb{I}_n \\ \mathbb{I}_n & -i\mathbb{I}_n \end{pmatrix}.
\end{equation}
This transforms symplectics into a basis which highlights the isomorphism between $\operatorname{OSp}(2n,\mathbb{R})$ and $\operatorname{U}(n)$, as will become clear. Acting on each matrix in the trace by $P$ and then $Q$ we find that the transformation of $Z^2$ gives
\begin{equation}
 Z^{2 \prime} := (QP)Z^2(QP)^{-1} = \begin{pmatrix} \Gamma^+ & \Gamma^- \\ \Gamma^- & \Gamma^+ \end{pmatrix},
\end{equation}
where $ \Gamma^\pm(z) := \operatorname{diag}(\zeta^{\pm}_{z_1},\ldots, \zeta^{\pm}_{z_n})$. $Y$ transforms as
\begin{equation}
Y' := (QP)Y(QP)^{-1} = \begin{pmatrix} \Upsilon & 0_n \\ 0_n & \Upsilon \end{pmatrix},
\end{equation}
where $\Upsilon = \operatorname{diag}(y_1,\ldots, y_n)$ and $R_2$ transforms as
\begin{equation}
\begin{aligned}
R'_2 := (QP)R_2(QP)^{-1} &= \begin{pmatrix} U^* & 0_n \\ 0_n & U \end{pmatrix}, \\
R^{\intercal \prime }_2 := (QP)R^\intercal_2(QP)^{-1} &= \begin{pmatrix} U^\intercal & 0_n \\ 0_n & U^{*\intercal} \end{pmatrix}, \\
\end{aligned}
\end{equation}
where $U$ is some general unitary matrix. Note that $*$ here denote the complex and not the Hermitian conjugate. Altogether
\begin{equation}\label{eq:twotermopt}
\begin{aligned}
\tr\lbrack Z^2 R_2 Y R_2^\intercal\rbrack &=  \tr \lbrack Z^{2 \prime} R'_{2} Y' R_2^{\intercal \prime}\rbrack 
=\tr[ \Gamma^+ U^* \Upsilon U^{\intercal}] + \tr[\Gamma^+ U \Upsilon U^{*\intercal}] 
= 2\alpha^\intercal P \beta \, ,
\end{aligned}
\end{equation}
where $\alpha(z)$ is a vector of the diagonal elements of $\Gamma^+(z)$ and $\beta$ is a vector of the diagonal elements of $\Upsilon$. $P_{ij} = |U_{ij}|^2$ making it a general unistochastic matrix, which is a subset of the bistochastic matrices \cite{Bengtsson2005}. Bistochastic matrices are those that have have non-negative entries and whose rows and columns sum to 1. It will suffice to show that, for the supremum case (at $z$ fixed),
\begin{equation}
 \sup_{X \textit{bistochastic}} \alpha^{\intercal} X \beta^{} = \alpha^{\downarrow \intercal} \beta^\downarrow.
\end{equation}
and for the infimum case that
\begin{equation}
  \inf_{X \textit{bistochastic}} \alpha^{\intercal} X \beta = \alpha^{\uparrow \intercal} \beta^\downarrow.
\end{equation}
because all permutations matrices are unistochastic \cite{Bengtsson2005}. Note that $\downarrow$ refers to rewriting the elements of the vector in descending order and the reverse for $\uparrow$. This is shown in Ref.~\cite{Coope2009} and the proof is reproduced in \ref{sec:coope}. In the supremum case we further have to take the maximum over $z_i$ which is set at $\bar{z}_i$. Similarly, the infimum taken over $z_i$ is $z_i = 1$. These two situations provide the final result stated in Eq.~(\ref{eq:prop1}).
\end{proof}

Thus we see that the maximisation of the trace term is found by setting $S = {\bar Z}=\diag({\bar z}_1, 1/{\bar z}_1,
\ldots {\bar z}_n, 1/{\bar z}_n)$ for the maximum squeezing value. The minimum is obtained by setting $S = \mathbb{I}$. Note that neither of these provide the unique maximum or minimum. We could just as well maximise by setting $S = R{\bar Z}$ for some $R \in \operatorname{OSp}(2n,\mathbb{R})$ and ${\bar Z}$ maximum. Minimisation will occur for any $S = R$. Proposition \ref{thm:tropt} shows that the instantaneous control must either squeeze or unsqueeze depending on the desired effect.

\subsection{Two-mode example}
Let us now apply our findings to the iconic case of an initial two-mode squeezed state, such as the output of a degenerate parametric down conversion process, 
or the state that would result by driving an optomechanical set-up on a blue sideband, characterised by the following covariance matrix \cite{Bucca,Ferraro2005}:
\begin{equation}
\sigma = \gamma \begin{pmatrix}
\cosh2r & 0 & \sinh2r & 0 \\
0 & \cosh2r & 0 & -\sinh2r \\
\sinh2r & 0 & \cosh2r & 0 \\
0 & -\sinh2r & 0 & \cosh2r
\end{pmatrix} \, ,
\end{equation}
with $\gamma\ge 1$. The parameter $\gamma$ allows us to range from pure states (for $\gamma=1$) 
to mixed states, with entropy growing for increasing $\gamma$.
The parameter $r$ is sometimes referred to as the `two-mode squeezing parameter' and, in a sense, 
quantifies the correlations between the two modes in this class of states. In practice, $r$ is determined 
by the strength of parametric interaction between the two modes and by the interaction time, if one assumes 
the state to be generated unitarily from the vacuum. 
In the following, we will assume the realistic value $r=0.4$.

To fix ideas, let us consider in what follows a loss rate of $100$ ${\rm KHz}$, 
and a $\chi$ parameter equal to $1.000013$, corresponding to the thermal noise 
at room temperature experienced by a mode of visible radiation at $450$ ${\rm THz}$ (a noisier situation could be construed but, as we
shall see, this typical optical circumstances will suffice to illustrate our methods).
For simplicity, we will quantify the evolution of the entropy of the relaxing states $\varrho_{t}$ through their purity 
$\mu={\rm Tr}[\varrho_t^2]$ which, for Gaussian states, turns out to be a function of the invariant $\vartheta_4[\Omega\sigma]$ alone (which is just the determinant of $\sigma$): 
$\mu=1/\sqrt{\vartheta_4[\Omega\sigma]}$ \cite{Bucca}. The steady-state purity with parameters set as above is $0.99997$.

The analytic optimisation expressed by Theorem~\ref{thm:masterm} and Proposition~\ref{thm:tropt}
may now be applied to compare the uncontrolled lossy relaxation 
of the initial states given above with the relaxation of optimally adjusted states under instantaneous, symplectic control. 

In the case of cooling, for $\gamma=2$ (such that the purity of the initial state is $0.25$), the optimal strategy 
to speed up relaxation is to undo the two-mode squeezing operation and let the vacuum state evolve through 
the channel. This allows one to attain a purity of $0.9$ after $30$ $\mu{\rm s}$, as opposed to 
the $35$ $\mu{\rm s}$ required by the uncontrolled evolution; similarly, a purity of $0.99$ is reached after 
$53$ $\mu{\rm s}$, as opposed to the $59$ $\mu{\rm s}$ of the uncontrolled evolution (clearly, the asymptotic purity is never perfectly reached, so one must resort to thresholds).

Let us now turn our attention to heating by setting $\gamma=1$ (perfectly pure initial state). Then, the fastest relaxation strategy would be to undo the two-mode squeezing of the initial state and apply maximum single-mode squeezing, as per Theorem~\ref{thm:masterm} and Proposition~\ref{thm:tropt}. In order to avoid the introduction of arbitrary maximum squeezing parameters 
(that would, in a lab, 
be dictated by specific technical constraints), let us just contrast the uncontrolled evolution of the 
initial pure two-mode squeezed state with the 
evolution of an initial vacuum state, obtained by unitarily undoing the initial two-mode squeezing: 
This will provide us with a spectacular demonstration of the effect of squeezing on relaxation dynamics. 
In mathematical terms, the difference between the two cases is captured by the trace term in Eq.~(\ref{eq:masterm}), for $n=2$ and $k=2$, which corresponds to the evolution of the determinant 
of a two-mode covariance matrix. In the case of heating, relaxation is hastened by 
an increase in the trace term: hence, whilst an initial vacuum state takes about $40$ $\mu{\rm s}$ 
to relax (to the significant digits reported above), the initial two-mode squeezed state takes as little as $4\times10^{-4}$ ${\mu}{\rm s}$! It can be shown that an instantaneous symplectic applied at this point would then be capable of stabilising not only the purity, but the whole covariance matrix (and hence the state)
to the steady-state values. Such a symplectic is the one that enacts the Williamson decomposition of the covariance matrix and can be efficiently evaluated with standard methods. 
To give a quantitative idea of the effect of such control on the fly, let us add that
the determinant of the initial two-mode squeezed state would, without it, overshoot the asymptotic determinant after $4\times10^{-4}$ ${\mu}{\rm s}$, reach a maximum 
(corresponding, in this instance, to a minimum purity of $0.855633$ at $7$ ${\mu}{\rm s}$), and then 
readjust to the steady state after a total time of about $140$ ${\mu}{\rm s}$. 

%\begin{figure}[]
  %\centering
    %\includegraphics[scale=0.7]{}
%\caption{The relaxation of the $\vartheta_2[\Omega\sigma]$ and $\vartheta_4[\Omega\sigma]$ for the squeezed ($2r=0.8$) and unsqueezed ($2r=0$) cases, given $\gamma=20$ and $\bar{N} = 3.69$.}
%\label{fig:detserdrop}
%\end{figure}

\subsection{Decoupling}\label{sec:decouple}
Proposition \ref{thm:tropt}, in conjunction with Eq.~(\ref{eq:masterm}), allows us to understand the dynamics of $\vartheta_{2k}$ when $S$ is controlled to be optimal at a given moment of time. In the optimal limit, either maximal or minimal, the value of $S$ must be either ${\bar Z}$ or $\mathbb{I}$. Using Williamson's theorem, as stated earlier, Eq.~(\ref{eq:master}) may be rewritten as
\begin{equation}
\d{(SWS^\intercal)}{t} = -SWS^\intercal + \chi \mathbb{I}.
\end{equation}
If we enact a control to set $S = Z$ or $S=\mathbb{I}$ at a given time then this equation will decouple so that it suffices to consider single mode evolution, as explored in Ref.~\cite{Carlini2014}. For the single mode case Eq.~(\ref{eq:masterm}) becomes
\begin{equation}\label{eq:prenueq}
\dot{\vartheta}_{2} = -2\vartheta_{2} + 2\chi \zeta^+_{z_i} \nu_i,
\end{equation}
where $\nu_i$ refers to the symplectic eigenvalue of the decoupled mode $i$, $z_i$ is its squeezing value obtained from the singular value decomposition and $\zeta_{z_i}^+$ is defined in Proposition \ref{thm:tropt}. Note that in the optimal maximal limit $z_i \to \overline{z}_i$ and in the minimal limit $z_i = 1$. We may talk about maximising or minimising the rate of change of $\vartheta_{2}$ by considering different limits of $z_i$. Recalling that $\vartheta_2 = \nu_i^2$ we see that Eq.~(\ref{eq:prenueq}) is equivalent to
\begin{equation}\label{eq:nueq}
\dot{\nu}_i + \nu_i - \chi\zeta_{z_i}^+ = 0,
\end{equation}
which has solution
\begin{equation}\label{eq:nusol}
  \nu_i (t)= \chi\zeta_{z_i}^+ + \left(\nu_{i0} - \chi\zeta_{z_i}^+\right)e^{-t},
\end{equation}
where $\nu_{i0}$ is the initial value of the decoupled mode $\nu_i$. The value of $\nu_i$ directly determines the entropy of the Gaussian state which for a single mode is defined as \cite{Holevo1999}
\begin{equation}
\kappa(\nu_i):= \frac{\nu_i + 1}{2} \ln \left\lbrack\frac{\nu_i+1}{2}\right\rbrack - \frac{\nu_i-1}{2} \ln\left\lbrack\frac{\nu_i-1}{2}\right\rbrack.
\end{equation}
$\kappa$ monotonically increases with $\nu_i$ and so, for a single mode, $\nu_i$ is a good entropy measure, being equal to one for pure states and greater than one for mixed states. Gaussian states have an associated temperature encoded in $\beta$, and each mode has an associated mode frequency $\omega_i$. The value of $\nu_i$ depends on the product of these two parameters via \cite{Adesso2014}
\begin{equation}
\nu_i = \frac{1+ e^{-\beta\omega_i}}{1-e^{-\beta\omega_i}}.
\end{equation}
If we fix $\omega_i$ then $\nu_i$ monotonically increases with the temperature, thus we refer to rising $\nu_i$ as heating and lowering  $\nu_i$ as cooling.

The fixed point of the evolution in Eq.~(\ref{eq:mastersol}) is at $\nu_i = \chi$. The time to reach this point diverges, hence we must fix a distance within which we are satisfied that we have arrived close enough to the target. Take this to be
\begin{equation}\label{eq:singledistance}
 |\nu_i - \chi | < \epsilon.
\end{equation}

The case $\nu_{i0} < \chi$ means that the channel is heating. The optimal strategy in this case is to increase the squeezing of each of the decoupled modes as much as possible. In other words, we choose $z_i=\bar{z}_i$ for all modes and the minimum amount of time for all to come within distance $\epsilon$ of the fixed point is
\begin{equation}\label{eq:optheat}
 T_{\textit{heat}} = \sup_{\nu_{i0}, \bar{z}_i} \ln\left[\frac{\chi\zeta_{\bar{z}_i}^+-\nu_{i0}}{\chi(\zeta_{\bar{z}_i}^+-1)+\epsilon}\right].
\end{equation}
Note that this time is in general finite and goes to zero for large values of $\bar{z}_i$ (i.e., $\zeta_{\bar{z}_i}^+$ large). 
Furthermore it is possible to set $\epsilon = 0$ and so we are able to reach exactly the fixed point in an infinitesimal amount of time for $\bar{z}_i \to \infty$ and a finite time for finite $\bar{z}_i$.

The case $\nu_{i0} > \chi$ means that the channel is cooling. In this case we optimally set $z_i = 1$ for all the decoupled modes. The minimum amount of time to come within distance $\epsilon$ of the fixed point of the dynamics is
\begin{equation}\label{eq:optcool}
 T_{\textit{cool}} =\sup_{\nu_{i0}} \ln\left\lbrack\frac{\nu_{i0} - \chi}{\epsilon}\right\rbrack.
\end{equation}
 $T_{\textit{cool}}$ diverges for $\epsilon = 0$ and so to reach the true fixed point still requires an infinite amount of time. However, in most cases it suffices to cause the state to come arbitrarily close to the thermal state of the bath. We see that in order to minimise the time to achieve this it is necessary to reduce the mode squeezing to zero, explicitly we wish to reduce the squeezing measure $\xi = \max\eig[S^\intercal S]-1$ to zero. This `unsqueezing' will need to be enacted to optimise cooling.

As it is shown in \ref{sec:onemode}, our analysis reproduces the results of Ref.~\cite{Carlini2014} for the special case of an unsqueezed bath and a single mode system. Referring to equations from Ref.~\cite{Carlini2014} with primes it is easy to show that $T_{\textit{heat}}$ coincides with $T^{\mathrm{heat}}_{\mathrm{fast}}$ (Eq.~(67')) and $T_{\textit{cool}}$ with $T^{\mathrm{cool}}_{\mathrm{fast}}$ (Eq.~(63')) provided that $\epsilon \simeq 2\sqrt{\chi^2-1}\sqrt{\epsilon'}$.

Furthermore we note that optimal control can prevent a cooling channel from ever decreasing the temperature of the state if it is sufficiently, periodically squeezed. This can be seen by noting that $\nu_i$ increases under Eq.~(\ref{eq:nueq}) when $\nu_i < \chi\zeta_{z_i}^+$ even when $\nu_i > \chi$, as seen in Fig.~\ref{fig:infinite}. 

\begin{figure}[]
  \centering
    \includegraphics[scale=0.25]{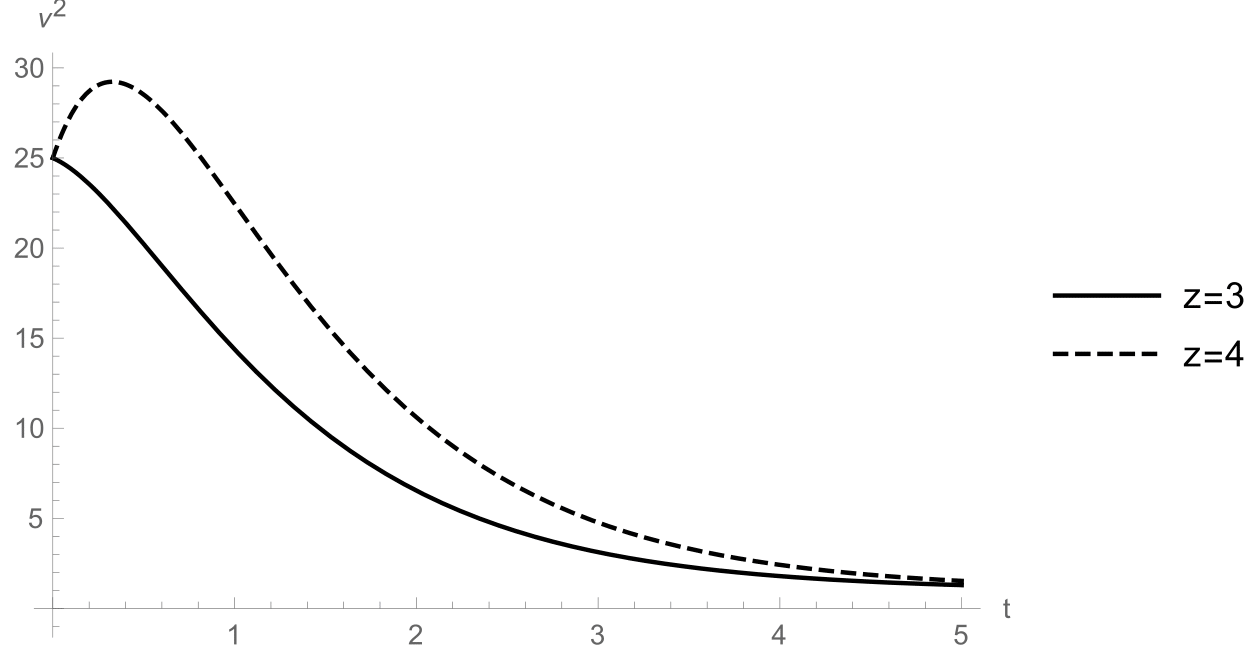}
  \caption{Showing the solution of Eq.~(\ref{eq:nusol}) for $\nu_0 = 5$, $\chi = 1$ and two different values of $z$, the initial squeezing value. $z=3$ means the entropy always decreases where as a higher level of squeezing, $z=4$ induces an initial increase of entropy.}
\label{fig:infinite}
\end{figure}

We would like to compare $T_{\textit{heat}}$ and $T_{\textit{cool}}$ with the free relaxation time of the state, in the absence of control. Computing this for a system of $n$ modes would require solving  Eq.~(\ref{eq:mastersol}) with some analogue of Eq.~(\ref{eq:singledistance}), which is not easy to achieve analytically. Ref.~\cite{Carlini2014}, however, provides results for the relaxation of single-mode Gaussian states and so a comparison can be made in this regime. 
As shown in \ref{sec:onemode}, the free decoherence time for a single mode can be written as:
\begin{equation}\label{eq:tfree}
T_{\textit{free}}= \ln \left [\frac{1}{\epsilon} \sqrt{ (\chi-\nu_0\zeta_{z_0}^+)^2+\nu_0^2 (\zeta_{z_0}^{+2}-1) \frac{(\chi^2-1)}{(\chi^2+1)} } \right ],
\end{equation}
where $z_0$ is the initial squeezing of the mode.
For the case when a single mode is involved, therefore, we can draw exactly the same conclusions as in Ref.~\cite{Carlini2014}. In particular, for vanishing tolerable errors, we get that the advantage of quantum control appears negligible if the channel is cooling while it allows for an exponential speed-up with respect to the uncontrolled dynamics for heating channels. This former statement is somehow reminiscent of the third law of thermodynamics according to which zero temperature is unattainable in finite time \cite{Masanes2016}. One can also study a measure of performance of the control procedure for the worst case scenario of possible initial conditions and obtain the same results discussed in Ref.~\cite{Carlini2014}. 

Therefore we see that squeezing and heating are intrinsically related. Squeezing itself does not affect the temperature of the state but under a lossy channel it is the key factor in increasing or decreasing its rate of change, here shown for any number of modes.

\section{Conclusion and Discussion}\label{sec:conc}
The number of degrees of freedom for an $n$-mode Gaussian state scales polynomially in $n$. Therefore when we consider the evolution of such a state it is important that we isolate the parameters that are most relevant. The information encoded in the $n$ symplectically invariant parameters allows one to extract the key property of entropy. In this paper we derived an equation for the evolution of such invariants and calculated the locally optimal rates for heating and cooling under a lossy channel. We have shown that among the time-optimal control schemes, one involves decoupling the dynamics into single modes. From here optimal cooling and heating come about by respectively unsqueezing and squeezing the decoupled modes.
As a result of the decoupling, one can extend the single-mode analysis of \cite{Carlini2014} to any number of degrees of freedom, which represents our main result. 
We have demonstrated that on the one hand heating can be accomplished arbitrarily well via optimal symplectic control in an arbitrarily short time if enough squeezing is allowed while, on the other hand, controlled cooling cannot be achieved equally well and fast as heating. 

This analysis rested on the derivation of Eq.~(\ref{eq:masterm}) but the techniques required can also provide insight into a property that is not a symplectic invariant: entanglement. The positivity of the partial transpose is necessary for the separability of a bipartite $p+q$ mode state, and sufficient if either $p$ or $q$ is equal to $1$ \cite{Lami2016,Bucca}. 
Separability is equivalent to the condition that $\tilde{\sigma} = T\sigma T$ obeys the uncertainty relation
\begin{equation}
\tilde{\sigma} + i\Omega \geq 0,
\end{equation}
where 
%$T = \bigoplus_1^p \diag(1,-1) \oplus \bigoplus_1^q \diag(1,1)$. 
$T = \bigoplus_1^p \sigma_z \oplus \bigoplus_1^q \mathbb{I}_2$, where the Pauli matrix $\sigma_z =\diag(1,-1)$.
This condition is equivalent to the smallest symplectic eigenvalue of $\tilde{\sigma}$ being greater than or equal to 1 which in turn gives the separability condition \cite{Lami2016}
\begin{equation}
\tilde{\Sigma} := \sum_{k=0}^{p+q} (-1)^{p+q+k} \vartheta_{2k}[\Omega\tilde{\sigma}] \geq 0.
\end{equation}
$T$ acts by similarity and is symmetric meaning that $\tilde{\sigma}$ remains positive definite, implying that it also has a Williamson decomposition: $\tilde{\sigma} = \tilde{S}\tilde{W}\tilde{S}^\intercal$. Furthermore it obeys $\dot{\tilde{\sigma}} = T\dot{\sigma}  T = -\tilde{\sigma} + \chi \mathbb{I}$ since $T$ is time invariant. This allows us to mimic the full analysis of~\ref{sec:sympinvevol} providing
\begin{equation}\label{eq:sigmatildedot}
\begin{aligned}
\dot{\tilde{\Sigma}} = \sum_{k=1}^{n} (-1)^{n+k} \Big(-2k \vartheta_{2k}[\Omega\tilde{\sigma}] +\chi \tr[\tilde{S}\tilde{V}_{k}\tilde{S}^\intercal]\Big),
\end{aligned}
\end{equation}
where $p+q=n$ and $\tilde{V}_{k}$ has the same definition as before except for the symplectic eigenvalues of $\tilde{\sigma}$. This provides us with an evolution equation for an entanglement measure of the $p+q$ mode system, directly from the analysis employed to derive Eq.~(\ref{eq:masterm}). The difference here, however is that $\vartheta_{2k}[\Omega\tilde{\sigma}]$ is not a symplectic invariant and so further analysis will require more sophistication. 

This paper brought together techniques from linear algebra to explore the evolution of multimode Gaussian states evolving under lossy channels. As exemplified by the calculation concerning entanglement, sketched above, such techniques may be extended to provide wider analytical insight. We restricted ourselves to the specific question of finding locally optimal cooling, or heating, for states undergoing lossy channel evolution. This has been answered and shown to be intrinsically related to squeezing. We showed that the study of locally optimal trajectories can be explored using decoupled dynamics and so the single mode case provides the lower bound on the relaxation time.

%%%%%%%%%%%%%%%%%%%%%%%%%%%%%%%%%%%%%%%%%%%%%%%%%%%%%%%%%%
\section*{Acknowledgments}

AS acknowledges financial support from EPSRC through the grant EP/K026267/1.

%%%%%%%%%%%%%%%%%%%%%%%%%%%%%%%%%%%%%%%%%%%%%%%%%%%%%%%%%%
\appendix
\section{Evolution of symplectic invariants}\label{sec:sympinvevol}
\subsection{Recursive formulas}
Reiterating Eq.~(\ref{eq:master}) the evolution of our system obeys
\begin{equation}
\dot{\sigma} = -\sigma + \chi \mathbb{I}.
\label{eq:appmaster}
\end{equation}
Using this and the Taylor expansion in $\rm{d}t$ we obtain
\begin{equation}\label{eq:time1}
\begin{aligned}
c_{2k}[\Omega\sigma (t+{\rm d}t)] &\simeq c_{2k}[\Omega(\sigma+\dot{\sigma}{\rm d}t)] 
 =c_{2k}[\Omega \sigma - {\rm d}t \Omega \sigma + \chi \Omega {\rm d}t] 
= c_{2k}[F+G{\rm d}t],
\end{aligned}
\end{equation}
where $F:= (1-{\rm d}t)\Omega\sigma$ and $G:= \chi \Omega$.

\begin{lemma}\label{thm:coddzero}
The Taylor expansion of the following trace term to first order is
 \begin{equation}\label{eq:treven}
\begin{aligned}
&\tr[(F+G{\rm d}t)^{2k-i}] \simeq 
\begin{cases}
  \tr[F^{2k-i}] + (2k-i) {\rm d}t \tr[F^{2k-i-1}G], \, &i \text{ even}, \\
0, \, &i \text{ odd}.
\end{cases}
\end{aligned}
 \end{equation}
\end{lemma}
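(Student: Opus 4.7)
The plan is to treat the two cases ($i$ even and $i$ odd) through a single non-commutative Taylor expansion and then exploit the antisymmetry of $\Omega$ together with the symmetry of $\sigma$ to kill the odd-$i$ contributions.

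First I would expand $(F+G\mathrm{d}t)^{2k-i}$ keeping only terms up to order $\mathrm{d}t$. Because $F$ and $G$ need not commute, the sum of all length-$(2k-i)$ words containing exactly one $G$ and the rest $F$'s gives
\begin{equation*}
(F+G\mathrm{d}t)^{2k-i} = F^{2k-i} + \mathrm{d}t \sum_{j=0}^{2k-i-1} F^{j} G F^{2k-i-1-j} + O(\mathrm{d}t^{2}).
\end{equation*}
Taking the trace and using cyclicity collapses the sum, giving
\begin{equation*}
\tr[(F+G\mathrm{d}t)^{2k-i}] \simeq \tr[F^{2k-i}] + (2k-i)\,\mathrm{d}t\,\tr[F^{2k-i-1}G].
\end{equation*}
This already reproduces the even-$i$ case of the lemma, since $2k-i$ is a positive integer in that branch and no further simplification is needed.

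Next I would establish the two trace identities that force the right-hand side to vanish when $i$ is odd. Both rely on the same move: transpose inside the trace and use $\Omega^{\intercal}=-\Omega$, $\sigma^{\intercal}=\sigma$, so that $(\Omega\sigma)^{\intercal}=-\sigma\Omega$. For the first term,
\begin{equation*}
\tr[(\Omega\sigma)^{m}] = \tr[((\Omega\sigma)^{m})^{\intercal}] = (-1)^{m}\tr[(\sigma\Omega)^{m}] = (-1)^{m}\tr[(\Omega\sigma)^{m}],
\end{equation*}
where cyclicity is used in the last equality; hence the trace vanishes for $m$ odd. For the second term, a nearly identical computation gives
\begin{equation*}
\tr[(\Omega\sigma)^{m}\Omega] = (-1)^{m+1}\tr[(\Omega\sigma)^{m}\Omega],
\end{equation*}
which vanishes for $m$ even. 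When $i$ is odd, $2k-i$ is odd and $2k-i-1$ is even, so the first identity kills $\tr[F^{2k-i}]$ and the second kills $\tr[F^{2k-i-1}G]$ (after pulling out the scalars $(1-\mathrm{d}t)^{\bullet}$ and $\chi$); this yields the stated $\simeq 0$.

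I do not anticipate any serious obstacle: the expansion step is a standard first-order Duhamel-type identity, and the vanishing of the odd traces is a clean consequence of the symmetry/antisymmetry of $\sigma$ and $\Omega$. The only care needed is to verify that the parity bookkeeping works out, namely that odd $i$ simultaneously makes the zeroth-order power odd and the first-order power of $\Omega\sigma$ (sandwiching a single $\Omega$) even, so that both symmetry arguments apply in the same case. Once this is in place, the lemma falls out immediately and feeds directly into the Fadeev-Le Verrier recursion used in the proof of Theorem~\ref{thm:masterm}.
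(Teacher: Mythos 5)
Your proposal is correct and follows essentially the same route as the paper: the same first-order non-commutative expansion collapsed by cyclicity of the trace, followed by the transpose/antisymmetry argument ($\Omega^{\intercal}=-\Omega$, $\sigma^{\intercal}=\sigma$) to show $\tr[(\Omega\sigma)^{m}]=0$ for $m$ odd and $\tr[(\Omega\sigma)^{m}\Omega]=0$ for $m$ even. The only difference is that you write out the second vanishing identity explicitly, which the paper leaves as ``a similar argument.''
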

\begin{proof}
Expanding $(F+G{\rm d}t)^{2k-i}$ to first order in ${\rm d}t$ we obtain a single term of the form $F^{2k-i}$ and $(2k-i)$ terms of the form $F^a G{\rm d}t F^b$ where $a+b = 2k-i-1$. The cyclicity of the trace allows us to reorder these elements to obtain $\tr[F^{2k-i}] + (2k-i) {\rm d}t \tr[F^{2k-i-1}G] + o({\rm d}t)$. Now it remains to show that to first order this expression is zero for $i$ odd.

It suffices to show that 
\begin{align}
 \tr[F^{2n+1}] = 0, \quad n \in \mathbb{N}, \label{eq:zero1} \\
\tr[F^{2n}G] = 0, \quad n \in \mathbb{N}. \label{eq:zero2}
\end{align}
To prove Eq.~(\ref{eq:zero1}) we use the invariance of the trace under cyclic permutations and transposes giving
\begin{align}
\tr[(\Omega\sigma)^{2n+1}] &= \tr\big[\big((\Omega\sigma)^{2n+1}\big)^\intercal \big] 
= (-1)^{2n+1}\tr[(\sigma \Omega)^{2n+1}] 
= -\tr[(\Omega\sigma)^{2n+1}] = 0, \nonumber
\end{align}
where we used $\Omega^\intercal = -\Omega$. Eq.~(\ref{eq:zero2}) is found using a similar argument. Putting these together we prove the proposition.
\end{proof}

Recall the recursive definition for the coefficients of the characteristic function given in Sec.~\ref{sec:sympinv}:
\begin{equation}\label{eq:ck}
c_k[X] = \frac{-1}{k} \sum_{i=0}^{k-1} \tr [X^{k-i}] c_i[X].
\end{equation}
Recalling that $c_{2k}[\cdot] \equiv \vartheta_{2k}[\cdot]$ we change notation to start considering these symmetric functions. Using the recursive formula and Lemma \ref{thm:coddzero} we see that
\begin{equation}\label{eq:levext3}
\vartheta_{2k}[F+G{\rm d}t] = \frac{-1}{2k} \sum_{i=0}^{k-1} \tr [(F+G{\rm d}t)^{2(k-i)}] \vartheta_{2i}[F+G{\rm d}t].
\end{equation}

\begin{lemma}\label{thm:cexpansion}
Taylor expanding $\vartheta_{2k}[F+G{\rm d}t]$ we arrive at
\begin{equation}\label{eq:cexp}
\begin{aligned}
\vartheta_{2k}&[F+G{\rm d}t] = 
  \vartheta_{2k}[F]- {\rm d}t \sum_{i=0}^{k-1}\tr[(\Omega\sigma)^{2(k-i)-1}G]\vartheta_{2i}[F] +o({\rm d}t).
\end{aligned}
\end{equation}
\end{lemma}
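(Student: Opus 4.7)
My plan is to induct on $k$ using Eq.~(\ref{eq:levext3}) as the engine, with Lemma \ref{thm:coddzero} providing the first-order expansions of the required traces. The base case $k=0$ is immediate since $\vartheta_0\equiv 1$ is constant.

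For the inductive step, assuming the result for all indices strictly smaller than $k$, I would substitute Lemma \ref{thm:coddzero} into (\ref{eq:levext3}) to obtain
\begin{equation*}
\vartheta_{2k}[F+G{\rm d}t] = -\frac{1}{2k}\sum_{i=0}^{k-1}\Big(\tr[F^{2(k-i)}] + 2(k-i)\,{\rm d}t\,\tr[F^{2(k-i)-1}G]\Big)\vartheta_{2i}[F+G{\rm d}t] + o({\rm d}t),
\end{equation*}
and then substitute the inductive hypothesis into each $\vartheta_{2i}[F+G{\rm d}t]$ for $i<k$. Collecting powers of ${\rm d}t$, the zeroth-order contribution is $-\frac{1}{2k}\sum_{i=0}^{k-1}\tr[F^{2(k-i)}]\vartheta_{2i}[F]$, which is exactly $\vartheta_{2k}[F]$ by applying the Fadeev-Le Verrier recursion (\ref{eq:ck}) directly to $F=(1-{\rm d}t)\Omega\sigma$ (the odd-order traces of $\Omega\sigma$ vanish by the same argument used in Lemma \ref{thm:coddzero}, so the F-LV sum closes over even indices only).

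The first-order piece receives two contributions: a direct term proportional to $\tr[F^{2(k-i)-1}G]\vartheta_{2i}[F]$ coming from the trace expansion, and a nested term coming from the inductive expansion of $\vartheta_{2i}[F+G{\rm d}t]$, proportional to $\tr[F^{2(k-i)}]\tr[(\Omega\sigma)^{2(i-l)-1}G]\vartheta_{2l}[F]$. Since every such term is already multiplied by ${\rm d}t$, I may replace $F$ by $\Omega\sigma$ inside the traces modulo $o({\rm d}t)$ corrections. Swapping the order of summation and re-indexing in the nested piece, the two contributions should combine into $-\sum_{i=0}^{k-1}\tr[(\Omega\sigma)^{2(k-i)-1}G]\vartheta_{2i}[F]$.

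The main obstacle is precisely this final algebraic collapse: the direct and nested contributions each produce coefficients of $\vartheta_{2l}[F]$ that do not individually match the target, but must be reconciled through a second application of Fadeev-Le Verrier to $\Omega\sigma$ itself, namely $2p\,\vartheta_{2p}[\Omega\sigma]+\sum_{j=0}^{p-1}\tr[(\Omega\sigma)^{2(p-j)}]\vartheta_{2j}[\Omega\sigma]=0$, which is what absorbs the spurious $\tr[F^{2(k-i)}]$ factor inherited from the inductive hypothesis. A cleaner alternative bypassing induction would be to invoke Jacobi's formula $\det(A+sB)=\det(A)(1+s\,\tr[A^{-1}B])+O(s^2)$ with $A=\Omega\sigma-\lambda\mathbb{I}$ and $B=\Omega(\chi\mathbb{I}-\sigma)$, expand $A^{-1}$ as a Neumann series in $\lambda^{-1}$, and read off the first-order correction to $\vartheta_{2k}[\Omega\sigma+sB]$ by matching the coefficient of $\lambda^{2n-2k}$; the vanishing of $\vartheta_{\text{odd}}[\Omega\sigma]$ and the same F-LV identity above then yield the claimed expression directly.
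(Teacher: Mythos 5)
Your main argument is essentially the paper's own proof: strong induction on $k$ driven by the recursion (\ref{eq:levext3}), with Lemma~\ref{thm:coddzero} supplying the trace expansions, the zeroth order in ${\rm d}t$ closing back to $\vartheta_{2k}[F]$ by a further use of Fadeev--Le Verrier, and the first order splitting into a direct and a nested contribution. The one place you write ``should combine'' is precisely where the paper does its real work: after you use the identity $2i\,\vartheta_{2i}+\sum_{l}\tr[\gamma^{2(i-l)}]\vartheta_{2l}=0$ on the direct term, what remains is the difference of two double sums, $\sum_{i}\sum_{l}\big(\tr[\gamma^{2(k-i)}]\tr[\gamma^{2(i-l)-1}G]-\tr[\gamma^{2(k-i)-1}G]\tr[\gamma^{2(i-l)}]\big)\vartheta_{2l}[F]$, and the paper kills this residue (its quantity $L$) with the reindexing identity (\ref{eq:summation}): for fixed $l$ the involution $i\mapsto k+l-i$ on the inner range maps one summand onto the other, so the difference vanishes termwise after symmetrisation. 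You name both ingredients (the second F--LV application and the reindexing), so the argument does go through, but that cancellation is the crux and needs to be carried out rather than asserted. Your Jacobi-formula alternative, by contrast, is a genuinely different route that the paper does not take: the directional derivative $\partial_s\det(X+sB)|_{s=0}=\tr[\mathrm{adj}(X)B]$ applied to $X=\Omega\sigma-\lambda\mathbb{I}$, together with the standard $\lambda$-expansion of the adjugate in terms of the $c_i[X]$ and powers of $X$, gives the first-order variation of every characteristic-polynomial coefficient in closed form; with the parity argument of Lemma~\ref{thm:coddzero} restricting everything to even indices, this delivers the lemma (and, taking $B=\Omega\dot\sigma$, the combined content of Lemmas~\ref{thm:cexpansion}--\ref{thm:mast}) in one stroke, at the price of invoking the adjugate expansion rather than staying entirely within the recursive framework the paper sets up.
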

\begin{proof}
From Eq.~(\ref{eq:levext3}) we can explicitly show that
 \begin{equation}\label{eq:early}
\begin{aligned}
\vartheta_0[F+G{\rm d}t] &= 1, \quad 
\vartheta_2[F+G{\rm d}t] &= \vartheta_2[F] - {\rm d}t \tr[\gamma G] \, ,
\end{aligned}
\end{equation}
where we define $\gamma :=\Omega\sigma$ for brevity in the proceeding proof and where we have also used Lemma \ref{thm:coddzero}.
%and $\tr[F+G{\rm d}t] = 0$. 
From here we will proceed with an inductive proof.  We assume that Eq.~(\ref{eq:cexp}) holds for some $k$ and then show that if this is true then it holds for $k+1$. 

Using Eq.~(\ref{eq:levext3}) we may write the expansion out for $k+1$ and then use Lemma \ref{thm:coddzero} and the definition of $F$ to arrive at
\begin{equation}\label{eq:2kp1one}
\begin{aligned}
\vartheta_{2(k+1)}[F+G{\rm d}t] =& 
 \frac{-1}{2(k+1)} \sum_{i=0}^{k}\Big( \left(1-2(k+1-i){\rm d}t\right)\tr[\gamma^{2(k+1-i)}] \\
&+ 2(k+1-i) {\rm d}t \tr[\gamma^{2(k+1-i)-1}G] \Big) \vartheta_{2i}[F+G{\rm d}t].
\end{aligned}
\end{equation}
Substituting Eq.~(\ref{eq:cexp}) into Eq.~(\ref{eq:2kp1one}) we extract the first two terms that look like they would fulfill the proof plus a final one that we would hence like to show is zero:
%\begin{widetext}
\begin{equation}
\begin{aligned}
\vartheta_{2(k+1)}[&F+G{\rm d}t] = \vartheta_{2(k+1)}[F] -{\rm d}t \sum_{i=0}^k \tr[\gamma^{2(k+1-i)-1}G] \vartheta_{2i}[F] \\
&+ \frac{\rm{d}t}{2(k+1)} \sum_{i=1}^{k}\Bigg( \tr[\gamma^{2(k+1-i)}]\sum_{j=0}^{i-1} \tr[\gamma^{2(i-j)-1}G] \vartheta_{2j}[F]+ 2i\tr[\gamma^{2(k+1-i)-1}G] \vartheta_{2i}[F]\Bigg).
\end{aligned}
\end{equation}
%\end{widetext}
Dropping $(2(k+1))^{-1}$ we proceed to examine the final piece, referring to it as $L$, and rewriting it as
\begin{equation}\label{eq:extra}
\begin{aligned}
L = &\rm{d}t\biggl \{\sum_{i=1}^{k-1}\tr[\gamma^{2(k-i)}]\sum_{j=0}^{i-1} \tr[\gamma^{2(i-j)-1}G] \vartheta_{2j}[F]
+ \sum_{i=1}^{k-1}2i\tr[\gamma^{2(k-i)-1}G] \vartheta_{2i}[F] \biggr \}.
\end{aligned}
\end{equation}
Note that we have relabelled $k$ as $k-1$ to shorten the expression but it will not alter the analysis. The ${\rm d}t$ at the front reminds us that everything should be expanded to zeroeth order inside the sum. To prove the lemma it is necessary to show that $L\equiv 0$.

Expanding $\vartheta_{2i}[F]$ to introduce another sum we arrive at
\begin{equation}\label{eq:nearly}
\begin{aligned}
L = & {\rm d}t\sum_{i=1}^{k-1}\sum_{j=0}^{i-1}\Bigg( \tr[\gamma^{2(k-i)}] \tr[\gamma^{2(i-j)-1}G]
- \tr[\gamma^{2(k-i)-1}G] \tr[\gamma^{2(i-j)}] \Bigg) \vartheta_{2j}[F].	
\end{aligned}
\end{equation}
From here note that for a general sum with elements $Y_{ij}$ we have
\begin{equation}\label{eq:summation}
\begin{aligned}
 \sum_{i=1}^{k-1} \sum_{j=0}^{i-1} Y_{ij} = \sum_{j=0}^{k-2} \sum_{i=j+1}^{k-1} Y_{ij} 
 =\frac{1}{2} \sum_{j=0}^{k-2} \sum_{i'=0}^{k-j-2} Y_{i'+j+1,j} + \frac{1}{2}\sum_{j=0}^{k-2}\sum_{i''=0}^{k-j-2} Y_{k-i''-1,j},
 \end{aligned}
\end{equation}
where $i' = i-(j+1)$ and $i'' = k-i'-j-2$. The first equality of Eq.~(\ref{eq:summation}) can be seen with observation. The second involves a redefinition of the sums where we split them into two halves and then redefine the labels such that one is descending whilst the other ascends. When such a summation redefinition is applied to Eq.~(\ref{eq:nearly}) it will be clear that $L \equiv 0$.

Thus we prove that if $\vartheta_{2k}[F+G{\rm d}t]$ is given in Eq.~(\ref{eq:cexp}) then this also holds for $k\to k+1$. From Eq.~(\ref{eq:early}) we see that it is true for $k=1$ and so, inductively it is true for all $k$. To write it in the form stated one must replace $\gamma$ with $\Omega\sigma$.
\end{proof}

\begin{lemma}\label{thm:ca}
Taylor expanding $\vartheta_{2k}[F]$ using $F := (1-{\rm d} t) \Omega \sigma$ we find that
\begin{equation}
 \vartheta_{2k}[F] = (1-2k{\rm d}t)\vartheta_{2k}[\Omega\sigma].
\end{equation}
\end{lemma}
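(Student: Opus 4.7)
The plan is to exploit the homogeneity of the elementary symmetric function $\vartheta_{2k}$. By its very definition as a sum of products of $2k$ eigenvalues, $\vartheta_{2k}$ is a homogeneous function of degree $2k$ in the eigenvalues of its argument. Hence, since scaling a matrix by a scalar $c$ scales each of its eigenvalues by $c$, one has
\begin{equation}
\vartheta_{2k}[c X] = c^{2k}\,\vartheta_{2k}[X]
\end{equation}
for any $m\times m$ matrix $X$ and any $c\in\mathbb{R}$.

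First I would apply this identity with $X=\Omega\sigma$ and $c=(1-{\rm d}t)$, which is exactly the prefactor appearing in the definition $F := (1-{\rm d}t)\Omega\sigma$. This immediately gives
\begin{equation}
\vartheta_{2k}[F] = (1-{\rm d}t)^{2k}\,\vartheta_{2k}[\Omega\sigma].
\end{equation}
Then I would Taylor-expand $(1-{\rm d}t)^{2k}$ to first order in ${\rm d}t$, retaining only the leading and linear terms, namely $(1-{\rm d}t)^{2k} = 1 - 2k\,{\rm d}t + o({\rm d}t)$, which yields the claimed expression. The homogeneity itself can be justified in a single line from the definition $\vartheta_{2k}[X]=\sum_{E\in\mathcal{E}_{2k}^m}\prod_{j\in E}\lambda_j$ given in the main text, or alternatively from the characteristic-polynomial formulation $\vartheta_{2k}[X]=c_{2k}[X]$ together with the scaling property $\det(cX-\lambda\mathbb{I})=c^m\det(X-(\lambda/c)\mathbb{I})$ of the characteristic polynomial.

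There is no real obstacle here: the lemma is essentially a bookkeeping statement that packages the degree-$2k$ homogeneity of $\vartheta_{2k}$ together with the linearisation in ${\rm d}t$ that will be used in the main proof of Theorem~\ref{thm:masterm}. The only thing to be slightly careful about is that higher-order terms in ${\rm d}t$ (namely all $(1-{\rm d}t)^{2k}$ contributions with $j\ge 2$) are genuinely negligible in the context in which this lemma is invoked, so that combining the present result with Lemma~\ref{thm:cexpansion} will produce the desired evolution equation without any spurious $o({\rm d}t)$ contamination.
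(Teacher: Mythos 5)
Your proof is correct, but it takes a genuinely different and substantially shorter route than the paper. The paper proves this lemma by unwinding the Fadeev--Le Verrier recursion for $\vartheta_{2k}[F]$ into a nested product of sums and then checking, term by term, that the first-order contributions in ${\rm d}t$ which do not assemble into $-2k\,{\rm d}t\,\vartheta_{2k}[\Omega\sigma]$ cancel in pairs --- a bookkeeping argument tied to the recursive machinery used throughout that appendix. You instead observe that $\vartheta_{2k}$ is a degree-$2k$ homogeneous function of the eigenvalues, so that $\vartheta_{2k}[(1-{\rm d}t)\Omega\sigma]=(1-{\rm d}t)^{2k}\vartheta_{2k}[\Omega\sigma]$ exactly, and then truncate the binomial expansion at first order. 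This is cleaner and arguably more illuminating: it yields the exact prefactor $(1-{\rm d}t)^{2k}$ before any linearisation, makes transparent why the coefficient is precisely $-2k$, and requires no cancellation analysis. The only caveat --- which you already flag --- is that the discarded $o({\rm d}t)$ terms must remain negligible when the lemma is combined with Lemma~\ref{thm:cexpansion} in the proof of Lemma~\ref{thm:mast}; since everything there is divided by ${\rm d}t$ and the limit ${\rm d}t\to 0$ is taken, this is indeed harmless. Both proofs establish the same statement; yours is preferable as a standalone argument, while the paper's has the minor virtue of exercising the same recursive expansion it relies on elsewhere.
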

\begin{proof}
Expanding out the recursive formula and again defining $\gamma := \Omega \sigma$ we get a product of sums of the form
\begin{equation}\label{eq:crossing}
\begin{aligned}
\vartheta_{2k}[F] =& \Bigg[\frac{-1}{2k} \sum_{i_1=0}^{k-1} \Big(1-2(k-i_1){\rm d}t\Big)\Bigg] 
\Bigg[\frac{-1}{2i_1} \sum_{i_2=0}^{i_1-1} \Big(1-2(i_1-i_2){\rm d}t\Big) \Bigg] 
\ldots \\
&\ldots \Bigg[\frac{-1}{2i_{k-1}} \sum_{i_k=0}^{i_{k-1}-1} \Big(1-2(i_{k-1}-i_k){\rm d}t\Big) \Bigg]  
 \tr[\gamma^{2(k-i)}] \ldots \tr[\gamma^{2(i_{k-1}-i_k)}].
\end{aligned}
\end{equation}
By only keeping terms that are less than second order in ${\rm d}t$ we get a smaller sum
\begin{equation}
(1-2k)\vartheta_{2k}[\Omega \sigma] + X
\end{equation}
$X$ consists of the remaining terms which come in pairs. Take for instance the first pair which is generated by choosing the $+2i_1{\rm d}t$ coefficient in the first line of Eq.~(\ref{eq:crossing}), with everything else at zeroeth order, and secondly the $-2i_1{\rm d}t$ coefficient in the second line, with everything else at zeroeth order. The pairs will each cancel to become zero. The final piece comes without a partner but has coefficient $i_k = 0$, and so does not contribute. Therefore $X\equiv 0$ and the lemma is proven.
\end{proof}

\begin{lemma}\label{thm:mast}
Using the previous two Taylor expansions we may write the rate of change of $\vartheta_{2k}$ as
\begin{equation}\label{eq:master1app}
\begin{aligned}
\dot{\vartheta}_{2k}[\Omega\sigma]=&-2k \vartheta_{2k}[\Omega\sigma]  
- \chi\sum_{i=0}^{k-1}\tr[(\Omega\sigma)^{2(k-i)-1}\Omega]\vartheta_{2i}[\Omega\sigma].
\end{aligned}
\end{equation}
\end{lemma}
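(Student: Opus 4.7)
The plan is to combine the two preceding Taylor expansions (Lemmas~\ref{thm:cexpansion} and \ref{thm:ca}) with the definition $G = \chi \Omega$ and read off the time derivative. Starting from Eq.~(\ref{eq:time1}), I would write
\begin{equation*}
\vartheta_{2k}[\Omega\sigma(t+{\rm d}t)] = \vartheta_{2k}[F + G\,{\rm d}t],
\end{equation*}
apply Lemma~\ref{thm:cexpansion} to expand the right-hand side to first order in ${\rm d}t$, and then use Lemma~\ref{thm:ca} to replace each occurrence of $\vartheta_{2j}[F]$ by its Taylor form.

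The key observation I would emphasise is that the sum term in Lemma~\ref{thm:cexpansion} already carries an explicit factor of ${\rm d}t$, so within that sum the factors $\vartheta_{2i}[F]$ may be replaced by $\vartheta_{2i}[\Omega\sigma]$ at zeroth order, the $O({\rm d}t)$ correction from Lemma~\ref{thm:ca} contributing only $o({\rm d}t)$ to the overall expression. By contrast, the prefactor $\vartheta_{2k}[F]$ must be expanded to first order, producing precisely the $-2k\,{\rm d}t\,\vartheta_{2k}[\Omega\sigma]$ contribution. Substituting $G = \chi\Omega$ in the trace factor $\tr[(\Omega\sigma)^{2(k-i)-1}G]$ pulls out the $\chi$ exactly as it appears in the statement.

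Collecting these pieces gives
\begin{equation*}
\vartheta_{2k}[\Omega\sigma(t+{\rm d}t)] - \vartheta_{2k}[\Omega\sigma(t)]
= -{\rm d}t\Bigl(2k\,\vartheta_{2k}[\Omega\sigma] + \chi \sum_{i=0}^{k-1} \tr[(\Omega\sigma)^{2(k-i)-1}\Omega]\,\vartheta_{2i}[\Omega\sigma]\Bigr) + o({\rm d}t),
\end{equation*}
from which the claimed evolution equation follows by dividing by ${\rm d}t$ and taking ${\rm d}t \to 0$.

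There is no serious obstacle left: the hard combinatorial work has already been done in the two earlier lemmas, in particular the inductive cancellation argument in Lemma~\ref{thm:cexpansion} and the pairwise cancellation in Lemma~\ref{thm:ca}. The only point requiring care is bookkeeping of the order in ${\rm d}t$ at which each factor needs to be expanded, and ensuring one does not inadvertently generate a spurious second-order term from the product of two ${\rm d}t$-corrections; with that caveat the proof is essentially a one-line assembly of the ingredients already in hand.
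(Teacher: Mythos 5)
Your proposal is correct and follows essentially the same route as the paper's own proof: substitute the expansion of Lemma~\ref{thm:cexpansion} into Eq.~(\ref{eq:time1}), use Lemma~\ref{thm:ca} on the leading $\vartheta_{2k}[F]$ term while keeping the summands at zeroth order, set $G=\chi\Omega$, and divide by ${\rm d}t$. Your explicit remark about which factors need first-order versus zeroth-order expansion is exactly the bookkeeping the paper performs implicitly.
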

\begin{proof}
The first Taylor expansion came from Lemma \ref{thm:cexpansion} stating that
\begin{equation}
\begin{aligned}
 \vartheta_{2k}&[F+G{\rm d}t] \simeq 
 \vartheta_{2k}[F] - {\rm d}t \sum_{i=0}^{k-1}\tr[F^{2(k-i)-1}G]\vartheta_{2i}[F].\nonumber
\end{aligned}
\end{equation}
Using Lemma \ref{thm:ca} and the same reasoning as in Eq.~(\ref{eq:time1}) we rewrite the above as
\begin{equation}\label{eq:masterminc}
\begin{aligned}
 \vartheta_{2k}[\Omega\sigma(t+{\rm d}t)]-\vartheta_{2k}[\Omega\sigma] \simeq 
 -2k{\rm d}t \vartheta_{2k}[\Omega\sigma] - {\rm d}t \sum_{i=0}^{k-1}\tr[(\Omega\sigma)^{2(k-i)-1}G]\vartheta_{2i}[\Omega\sigma].\nonumber
\end{aligned}
\end{equation}
Dividing through by ${\rm d}t$ we prove the proposition, recalling that $G:= \chi \Omega$.
\end{proof}

\subsection{Telescoping the series}
Eq.~(\ref{eq:master1app}) provides a differential equation describing the rate of change of $\vartheta_{2k}$. We now work towards a simplification of this equation using Williamson's theorem \cite{Williamson1936, Olivares2011} and noticing that the series `telescopes' to provide a simpler form.

Given Williamson decomposition, $\sigma = SWS^\intercal$, as described earlier, and the symplectic property: $S^\intercal \Omega S = \Omega$, we may rewrite the trace term that appears in the sum of Eq.~(\ref{eq:master1app}):
\begin{equation}
\begin{aligned}
 \tr[(\Omega\sigma)^{2k-1} \Omega] &= \tr[\overbrace{\Omega SWS^\intercal \ldots \Omega SWS^\intercal}^{2k-1} \Omega] 
= -\tr[S\overbrace{WS^\intercal \Omega S \ldots WS^\intercal \Omega S}^{2k-2} WS^\intercal ] \\
&= -\tr[S \overbrace{W\Omega \ldots W\Omega}^{2k-2} W S^\intercal]
= -\tr[S W^{2k-2} \Omega^{2k-2} WS^\intercal] \\
&= (-1)^{k} \tr[SW^{2k-1} S^\intercal].\nonumber
\end{aligned}
\end{equation}
where we used $(W\Omega)^2=W^2\Omega^2$, $\Omega^2 = -\mathbb{I}$ as well as cyclic properties of the trace. Thus Eq.~(\ref{eq:master1app}) becomes
\begin{equation}\label{eq:master2app}
\begin{aligned}
\dot{\vartheta}_{2k}[\Omega\sigma]=&-2k \vartheta_{2k}[\Omega\sigma]  
-\chi \sum_{i=0}^{k-1}(-1)^{k-i} \tr[SW^{2(k-i)-1}S^\intercal]\vartheta_{2i}[\Omega\sigma].
\end{aligned}
\end{equation}
Using definitions given in Sec.~\ref{sec:sympinv} we may now prove the theorem stated in the main body:
\begin{thm-hand}[2]
For $\sigma$ evolving under $\dot{\sigma} = -\sigma + \chi\mathbb{I}$, the evolution of the symplectic invariants, defined by $\vartheta_{2k}[\Omega\sigma]$, obeys
\begin{equation}\label{eq:mastermapp}
\dot{\vartheta}_{2k}[\Omega\sigma] = -2k\vartheta_{2k}[\Omega\sigma] +\chi \tr[SV_{k}S^\intercal],
\end{equation}
where
\begin{equation}
V_{k} = \bigoplus_{i=1}^n \left(\nu_i \bar{\vartheta}_{2(k-1)}^i[\Omega\sigma]\right)\mathbb{I}_2.
\end{equation}
%and where $\mathbb{I}_2$ is the $2\times 2$ identity matrix.
\end{thm-hand}
\begin{proof}
Beginning with the final term of Eq.~(\ref{eq:master2app}) and dropping the $\chi$ factor we may take the sum inside the trace to give
 \begin{equation}\label{eq:telescope}
 \begin{aligned}
\sum_{i=0}^{k-1} (-1)^{k+1}(-1)^i \tr[SW^{2(k-i)-1}S^\intercal]\vartheta_{2i} =
  \tr\left\lbrack S \sum_{i=0}^{k-1} (-1)^{k+1} (-1)^i W^{2(k-i)-1} \vartheta_{2i} S^\intercal\right\rbrack,
  \end{aligned}
 \end{equation}
which allows us to consider the internal sum before tracing. Recalling that the diagonal elements of $W$ occur in pairs we only need to consider $n$ symplectic eigenvalues denoted $\nu_q$, $q = 1,\ldots,n$. The $i$th term of the sum in Eq.~(\ref{eq:telescope}) takes the form
\begin{equation}\label{eq:tele1}
\begin{aligned}
(-1)^{k+1}(-1)^i \Bigg(\nu_q^{2(k-i)+1} \bar{\vartheta}^q_{2(i-1)} + \nu_q^{2(k-i)-1}\bar{\vartheta}_{2i}^q\Bigg),
\end{aligned}
\end{equation}
and the $i+1$th takes the form
\begin{equation}\label{eq:tele2}
 (-1)^{k+1}(-1)^{i+1} \Bigg(\nu_q^{2(k-i)-1} \bar{\vartheta}^q_{2i} + \nu_q^{2(k-i)-3}\bar{\vartheta}_{2(i+1)}^q\Bigg).
\end{equation}
Recall that $\bar{\vartheta}_a^b$ is a sort of reduced symplectic invariant which is similar to $\vartheta_a$ but where we remove all terms involving the $b$th symplectic eigenvalue. By splitting up Eq.~(\ref{eq:telescope}) to consider each $\nu_q$ individually and also by spltting each $i$th term we see that telescoping is going to occur. This is the situation when the latter piece of the $i$th term cancels the former term of the $i+1$th term. Noting that this cancellation is going to occur between Eq.~(\ref{eq:tele1}) and Eq.~(\ref{eq:tele2}) we should be left with the very first and last pieces of the entire series. The first piece is equal to zero and so we are just left with the final term
\begin{equation}
 (-1)^{k+1}(-1)^{k-1} \nu_q \bar{\vartheta}^q_{2(k-1)} = \nu_q \bar{\vartheta}^q_{2(k-1)},
\end{equation}
and therefore the expressions of Eq.~(\ref{eq:telescope}) are equal to $\tr\left\lbrack S V_{k} S^\intercal\right\rbrack$ where 
\begin{equation}
V_{k} = \bigoplus_{i=1}^n \left(\nu_i \bar{\vartheta}_{2(k-1)}^i[\Omega\sigma]\right)\mathbb{I}_2.
\end{equation}
\end{proof}

\section{Bistochastic limits}\label{sec:coope}
Here, we prove an optimisation result regarding bistochastic matrices. We begin with the reiteration of the notation that, for some vector $v$, $v^\downarrow$ denotes a new vector of elements of $v$ written in descending order. $v^\uparrow$ is similar but for ascending order.
\begin{lemma}
Let $\alpha$ and $\beta$ be two real vectors of length $m$. The supremum of the following inner product is given by
\begin{equation}
 \sup_{X \textit{bistochastic}} \alpha^{\intercal} X \beta^{} = \alpha^{\downarrow \intercal} \beta^\downarrow.
\end{equation}
and the infimum by
\begin{equation}
  \inf_{X \textit{bistochastic}} \alpha^{\intercal} X \beta = \alpha^{\uparrow \intercal} \beta^\downarrow.
\end{equation}
\end{lemma}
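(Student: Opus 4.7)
The plan is to reduce the optimization from the bistochastic polytope to the finite set of permutation matrices via Birkhoff's theorem, and then apply the rearrangement inequality to pin down the extremal permutations.

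First I would note that the map $X \mapsto \alpha^{\intercal} X \beta$ is linear in $X$, while the set of $m\times m$ bistochastic matrices is a compact convex polytope. By the Birkhoff--von Neumann theorem, its extreme points are precisely the $m!$ permutation matrices. Since a continuous linear functional on a compact convex set attains both its supremum and infimum at extreme points, the problem collapses to
\begin{equation*}
\sup_{X \text{ bist.}} \alpha^{\intercal} X \beta = \max_{\pi \in S_m} \sum_{i=1}^m \alpha_i \beta_{\pi(i)}, \qquad \inf_{X \text{ bist.}} \alpha^{\intercal} X \beta = \min_{\pi \in S_m} \sum_{i=1}^m \alpha_i \beta_{\pi(i)}.
\end{equation*}

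Next I would invoke the Hardy--Littlewood--P\'olya rearrangement inequality to identify the optimizing permutations. This admits a short swap-argument proof: if an optimal $\pi$ for the maximum has a pair of indices $i,j$ with $\alpha_i < \alpha_j$ but $\beta_{\pi(i)} > \beta_{\pi(j)}$, then interchanging $\pi(i) \leftrightarrow \pi(j)$ alters the sum by $(\alpha_j - \alpha_i)(\beta_{\pi(i)} - \beta_{\pi(j)}) > 0$, contradicting optimality. Hence the maximizing $\pi$ sorts $\alpha$ and $\beta$ in the same order, yielding $\alpha^{\downarrow\intercal}\beta^{\downarrow}$; a symmetric argument (or simply replacing $\alpha \mapsto -\alpha$) shows the minimum is achieved by sorting them in opposite orders, giving $\alpha^{\uparrow\intercal}\beta^{\downarrow}$.

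The two main ingredients are essentially classical, so there is no serious obstacle; the only care required is bookkeeping, namely making sure the sorting conventions are consistent and noting the equivalent form $\alpha^{\uparrow\intercal}\beta^{\downarrow} = \alpha^{\downarrow\intercal}\beta^{\uparrow}$. One could of course cite the rearrangement inequality directly, but the swap argument is so short that it is natural to include it for completeness, especially since Proposition \ref{thm:tropt} only needs the case of nonnegative entries (coming from the $\zeta^+_z$ and $y_i$ in the application), where the statement is even easier to visualize.
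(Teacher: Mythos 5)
Your proof is correct, but it follows a genuinely different route from the paper's. The paper reproduces the argument of Ref.~\cite{Coope2009}: after using permutation matrices to sort $\alpha$ and $\beta$, it takes an arbitrary bistochastic $X$ and repeatedly perturbs a $2\times 2$ pattern of its entries (adding $\epsilon$ at positions $(k,k)$ and $(p,l)$ and subtracting it at $(k,l)$ and $(p,k)$), showing that each such step zeroes an off-diagonal entry while changing the objective by $\epsilon(a_k-a_p)(b_k-b_l)$, which has the right sign; iterating drives $X$ to the identity without worsening the objective. That argument is elementary and self-contained --- it never invokes the Birkhoff--von Neumann theorem. You instead exploit linearity of $X\mapsto\alpha^{\intercal}X\beta$ together with Birkhoff--von Neumann to reduce to the $m!$ permutation matrices, and then apply the rearrangement inequality via the swap argument. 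Your version is more modular and conceptually cleaner, at the cost of importing Birkhoff's theorem as a black box; the paper's version in effect establishes the needed consequence of that theorem on the fly by an explicit pivoting procedure. Both are complete. The only points worth making explicit in your write-up are that the swap argument tolerates ties (the change in the objective is then zero rather than strictly positive, so the sorted value $\alpha^{\downarrow\intercal}\beta^{\downarrow}$ is still attained), and that, as you correctly observe, nonnegativity of the entries is not needed for the lemma itself, only for its application in Proposition~\ref{thm:tropt}.
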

\begin{proof}
Taken from Ref.~\cite{Coope2009}. We begin with a consideration of the supremum and then turn our attention to the infimum. The permutation matrices form a subset of the bistochastic matrices and so we may alter $X$ to make $\alpha$ and $\beta$ descending. Then we fix $X$ to some specific bistochastic matrix to define
\begin{equation}
\chi := \alpha^{\downarrow \intercal} X \beta^{\downarrow} \equiv \sum_{i,j=1}^m a_i b_j X_{ij},
\end{equation}
where $a_i$ and $b_i$ are the elements of $\alpha^\downarrow$ and $\beta^\downarrow$ respectively. Consider $X \neq \mathbb{I}$. Let $k$ be the smallest index $i$ such that $X_{ii} \neq 1$. Note that for $i < k$, $X_{ii} = 1$ and therefore $X_{ij} = 0$ if $i< k$ and $i \neq j$, or if $j<k$ and $i\neq j$. Since $X_{kk} < 1$, then for some $l > k$, $X_{kl} > 0$. Likewise, for some $p>k$, $X_{pk} > 0$. These imply that $X_{pl} \neq 1$.

The inequalities above mean that we can choose $\epsilon >0$ such that the matrix $X'$ is bistochastic where
\begin{align}
X'_{kk} &= X_{kk} + \epsilon, \nonumber \\
X'_{kl} &= X_{kl} - \epsilon, \nonumber\\
X'_{pk} &= X_{pk}- \epsilon, \nonumber\\
X'_{pl} &= X_{pl} + \epsilon, \nonumber
\end{align}
and $X'_{ij} = X_{ij}$ in all other cases. Now define
\begin{equation}
\chi' = \sum_{i,j = 1} a_i b_j X'_{ij}.
\end{equation}
Recalling that $l>k$ and $p>k$, so that $a_k<a_p$ and $b_k<b_l$,
\begin{equation}
\begin{aligned}
\chi' - \chi = \epsilon(a_k b_k - a_kb_l - a_mb_k + a_mb_l) 
=\epsilon(a_k -a_m)(b_k-b_l) 
\geq 0,
\end{aligned}
\end{equation}
which means that the term $\sum a_ib_j X_{ij}$ is not decreased. $\epsilon$ may be chosen to reduce an off-diagonal term in $X$ to zero without affecting the bistochasticity of $X$ and without decreasing the value of $\chi'-\chi$. As this process is iterated $X$ may be brought to identity $\mathbb{I}$ without decreasing $\chi$, achieving its maximum value.

The argument for the infimum is completely analogous except we begin by considering
\begin{equation}
\chi := \alpha^{\uparrow \intercal} X \beta^{\downarrow} \equiv \sum_{i,j=1}^m c_i b_j X_{ij},
\end{equation}
where $c_i$ are the elements of $\alpha^\uparrow$. The analysis is similar except that we will be considering $\chi' - \chi \leq 0$, and so transforming $X$ into $\mathbb{I}$ does not \textit{increase} the value of $\chi$. Thus we find $X = \mathbb{I}$ again but for the situation where the vectors are oppositely ordered.
\end{proof}

\section{The case of a single mode}\label{sec:onemode}

Here we compare $T_{\textit{heat}}$ and $T_{\textit{cool}}$ with $T^{\mathrm{heat}}_{\mathrm{fast}}$, 
$T^{\mathrm{cool}}_{\mathrm{fast}}$ and $T_{\mathrm{free}}$ derived in Ref.~\cite{Carlini2014} for the case of a single mode
coupled to an unsqueezed bath.
For clarity, equations quoted from Ref.~\cite{Carlini2014} will be primed.
Both papers begin with Eq.~(\ref{eq:covmastergen}) which allows us to identify
\begin{equation}
\gamma \leftrightarrow \eta, \quad M_1=M_2 \leftrightarrow 0, \quad \mu \leftrightarrow 1/\nu,
\end{equation}
with symbols from Ref.~\cite{Carlini2014} on the left and ours on the right. 
Eqs.~(38')-(39') show the fixed point of the dynamics to be
\begin{equation}
r_{\mathrm{fp}}= 0, \mu_{\mathrm{fp}}\leftrightarrow 1/\chi.
\end{equation} 
Using Eq.~(33') and noting that $\mathrm{Tr} (\sigma) =2\nu \zeta_z^+$ we identify
\begin{equation}
\cosh 2r  \leftrightarrow \zeta_z^+
\end{equation}
and maximal squeezing in $\cosh 2r_M \leftrightarrow\zeta_{\bar z}^+$, while for the initial conditions we identify
\begin{equation}
\mu_0\leftrightarrow 1/\nu_0, \quad \cosh 2r_0\leftrightarrow \zeta_{z_{\textit{0}}}^+,
\end{equation}
where $z_{\textit{0}}$ is the initial squeezing value. Ref.~\cite{Carlini2014} uses the fidelity Eqs.~(50')-(51') for the tolerable error while we use Eq.~(\ref{eq:singledistance}). To avoid confusion, we rename $\epsilon '$ the allowed error in Eq.~(50'). In particular, we can rewrite Eqs.~(50') and (66') for the heating and cooling cases, respectively, as 
$\mu_{\mathrm{Th/Tc}}\leftrightarrow \frac{1}{\chi}\left [1(+/-) 2\frac{\sqrt{\epsilon '}}{\chi}\sqrt{\chi^2-1}\right ]$. 
With all of the above taken into account, it is then very easy to verify that Eq.~(67') and Eq.~(63') coincide with our optimal times for decoupled modes, i.e. that 
\begin{equation}
T^{\mathrm{heat}}_{\mathrm{fast}}\leftrightarrow T_{\textit{heat}}, \quad T^{\mathrm{cool}}_{\mathrm{fast}}\leftrightarrow T_{\textit{cool}}
\end{equation}
provided that $\epsilon \simeq 2\sqrt{\chi^2-1}\sqrt{\epsilon '}$.
Finally, the free decoherence time Eq.~(54') is translated as Eq.~(\ref{eq:tfree}).
%\begin{equation}\label{eq:tfree}
%T_{\mathrm{free}}\leftrightarrow \ln \left [\frac{1}{\epsilon} \sqrt{ (\chi-\nu_0\zeta_{z_0}^+)^2+\nu_0^2 (\zeta_{z_0}^{+2}-1) \frac{(\chi^2-1)}{(\chi^2+1)} } \right ].\end{equation}

\section*{References}
%\Bibliography{99}
\bibliographystyle{iopart-num}
\bibliography{mybib}{}
%\endbib

\end{document}